%% file: main.tex
\documentclass[letterpaper,journal]{IEEEtran}
\usepackage{amsmath,amssymb,amsfonts}
\usepackage{mathtools}
\usepackage{algorithmic}
\usepackage{algorithm}
\usepackage{array}
\usepackage{placeins}
\usepackage[caption=false,font=normalsize,labelfont=sf,textfont=sf]{subfig}
\usepackage{textcomp}
\usepackage{stfloats}
\usepackage{url}
\usepackage{verbatim}
\usepackage{graphicx}
\usepackage{cite}
\usepackage{float}
\newtheorem{theorem}{Theorem}
\newtheorem{lemma}[theorem]{Lemma}
\newtheorem{proposition}[theorem]{Proposition}

\begin{document}

\title{Robust Multi-Source DoA Estimation under Weather-Induced Distortion: Structured Covariance Methods and Performance Bounds}

\author{Chenyang~Yan,~\IEEEmembership{Graduate Student Member,~IEEE,}
        Geert~Leus,~\IEEEmembership{Fellow,~IEEE,}
        and~Mats~Bengtsson,~\IEEEmembership{Senior Member,~IEEE}

\thanks{This work has been funded by the FP6 -- FutuRe project, from the Europe's Rail Joint Undertaking under the European Union’s Horizon 2020 research and innovation programme, grant agreement No. 101101962. This publication reflects only the author’s view and the EuRail JU is not responsible for any use that may be made of the
information it contains.}
}


\IEEEpubid{}

\maketitle

\begin{abstract}
Direction-of-arrival (DoA) estimation in adverse weather is degraded by propagation-induced phase and amplitude distortions that violate the covariance structure assumed by classical subspace methods. Motivated by a physics-based model of rain propagation, we develop a structured covariance formulation for rain-distorted arrays in both single- and multi-source scenarios. A key ingredient is to relax the physically parameterized distortion model and instead model the distortion covariance as an unknown real-valued Hermitian Toeplitz matrix. For the single-source case, we derive a closed-form covariance-matching calibration algorithm and provide structured-covariance and physics-informed Cram\'{e}r--Rao lower bounds (CRLBs). For the multi-source case, we prove that the original per-source distortion model is non-identifiable and introduce a collinear approximation across angles. Under this approximation, we propose three multi-source calibration methods: alternating LASSO, joint LASSO, and a joint nuclear-norm formulation. Simulations under heavy rain demonstrate improved DoA accuracy and source resolvability compared with conventional baselines, highlighting the benefit of structured covariance modeling and calibration for safety-critical sensing.
\end{abstract}

\begin{IEEEkeywords}
Direction-of-arrival estimation, adverse weather, rain-induced distortions, structured covariance modeling, Cramér–Rao lower bound (CRLB)
\end{IEEEkeywords}
\input{introduction}

\input{distortion_model}

\input{measurement_model}
\input{doa_covariance_matching}
\input{calibration}

\input{performance_bounds}

\input{numerical_results}

\input{conclusion}
\bibliographystyle{IEEEtran}
\bibliography{ref}
\input{appendix}
\end{document}

%% file: introduction.tex
\section{Introduction}
Direction-of-arrival (DoA) estimation is a fundamental problem in array signal processing, with direct relevance to automotive sensing, intelligent transportation, and safety monitoring at critical infrastructures such as railway level crossings \cite{yan2026obstacle}. Classical spectral and subspace methods, including Capon beamforming, MUSIC, Root-MUSIC, and ESPRIT, provide well-established benchmarks under narrowband and well-calibrated array models \cite{Capon1969,Schmidt1986,Barabell1983,RoyKailath1989}. Modern mmWave radar systems have made these techniques practical for real-time localization and tracking \cite{Hasch2012}, while robust beamforming, covariance fitting, sparse reconstruction, and gridless super-resolution further improve angular discrimination when the assumed covariance structure is reliable \cite{Vorobyov2013,Malioutov2005,Candes2014,VanTrees2002,stoica2005spectral}.

Adverse weather violates this favorable structure. Besides attenuation and backscattering, rain can induce stochastic phase and amplitude distortions because raindrop sizes are comparable to mmWave radar wavelengths. Previous studies have identified rain as a particularly severe weather condition for radar propagation \cite{Ballistic}, and related random-media models also arise for snow and mist \cite{pozhidaev2010estimation}. We therefore focus on rainy conditions, while the proposed covariance-based formulation is applicable to broader random-media scenarios.

Most weather-aware radar studies quantify degradation through attenuation coefficients or SINR penalties, whereas the impact of random media on the array covariance is less often modeled explicitly. Recent physics-based propagation models based on the $S$-matrix method describe rain as a sequence of random scattering slabs and preserve phase relationships that are critical for DoA estimation \cite{yektakhah2024model,yektakhah2023physics}. These models show that weather-induced distortions can be viewed as multiplicative array-response perturbations, which explains the failure of classical subspace methods and motivates structured covariance calibration.

Building on this insight, this paper develops a robust single- and multi-source DoA estimation framework that treats weather distortion as a structured nuisance. Instead of relying directly on a fully parameterized physical distortion model, we represent the distortion covariance by an unknown real-valued Toeplitz matrix. This yields a covariance-domain formulation that separates the ideal signal covariance from the rain-induced component and enables generalized least-squares calibration. In the single-source case, the resulting estimator admits an efficient closed-form solution. In the multi-source case, we show that the original angle-dependent distortion model can become non-identifiable, introduce a reduced collinear approximation, and propose three calibration algorithms based on alternating LASSO, joint LASSO, and nuclear-norm regularization. We further derive Cramér--Rao lower bounds (CRLBs) using the Slepian--Bangs formula for complex Gaussian data \cite{stoica2005spectral}. These bounds quantify how the nuisance parameters associated with weather-induced distortion reduce the effective Fisher information for DoA estimation, and how this reduction depends on rain severity, array aperture, and integration time. Numerical results under realistic rain rates demonstrate improved DoA accuracy and source resolvability compared with conventional baselines. 

\textit{Contributions.} The main contributions are summarized below. This journal manuscript substantially extends our prior conference paper~\cite{yan2025robustcovariance} in several directions:
\begin{itemize}
    \item We generalize the rain-induced distortion model and the associated covariance formulation from the single-source setting to the multi-source case.

    \item We develop a covariance-matching estimation framework that admits closed-form least-squares (LS) / weighted least-squares (WLS) calibration in the single-source case. For the multi-source case, we propose three calibration algorithms that leverage sparse and low-rank covariance structure: an alternating-optimization method with LASSO, a joint convex LASSO formulation, and a joint nuclear-norm regularized formulation. 

    \item We derive and compare the angle-dependent Cramér--Rao lower bounds (CRLBs) for both the proposed structured model and the physics-based model in the single-source scenario, and we assess the consistency between the empirical RMSE and the theoretical bounds. In addition, we identify conditions under which multi-source estimation becomes non-identifiable in the absence of additional structural constraints.
\end{itemize}
\FloatBarrier

%% file: distortion_model.tex
\section{Distortion Model under Adverse Weather}
\label{sec:2}

\begin{table}[t]
\centering
\caption{Parameter settings and $\alpha$ values for Figs.~\ref{pdf_demo} and~\ref{ratio}.}
\begin{tabular}{c|c|c|c|c}
\hline
Case & $d$ $(\lambda_0)$ & $R$ (m) & Rain rate (mm/hr) & $\alpha$ \\
\hline
(i)  & 4 & 200 & 25 & 0.6470 \\[2pt]
(ii) & 4 & 400 & 25 & 0.6217 \\[2pt]
(iii)& 8 & 200 & 25 & 0.5598 \\[2pt]
(iv) & 4 & 200 & 50 & 0.4994 \\[2pt]
\hline
\end{tabular}
\label{parameter_table}
\end{table}

Following the rain-induced wavefront fluctuation model in~\cite{yektakhah2024model}, let $P$ denote a point on the received wavefront. For a given rain realization, let $E(P)\in\mathbb{C}$ denote the complex electric-field envelope at $P$. The mean (coherent) field is defined as $E_{\mathrm{mean}}(P)\triangleq \mathbb{E}\{E(P)\}$, where $\mathbb{E}\{\cdot\}$ denotes the ensemble average over independent rain realizations. The normalized electric-field fluctuation is defined as
\begin{equation}
E_n(P) \triangleq \frac{E(P)-E_{\mathrm{mean}}(P)}{E_{\mathrm{mean}}(P)}.
\label{eq:normalized_field}
\end{equation}

Consider two points $P_1$ and $P_2$ located on the same wavefront, at a distance $R$ (in meter) from the source, and separated by a distance $d$ (in wavelength). The corresponding normalized fluctuations, denoted by $E_{n,1}$ and $E_{n,2}$, are modeled as zero-mean, jointly circularly symmetric complex Gaussian random variables with second-order statistics
\begin{align}
\mathbb{E}\{|E_{n,1}|^2\} &= \mathbb{E}\{|E_{n,2}|^2\} = 2\lambda_{11}, 
\\
\mathbb{E}\{E_{n,1} E_{n,2}^*\} &= 2\alpha\lambda_{11}, \quad 
\Im\!\left\{\mathbb{E}\{E_{n,1} E_{n,2}^*\}\right\}=0,
\label{eq:corss_correlation}
\end{align}
where $\lambda_{11}$ controls the fluctuation power and the real valued coefficient $\alpha$ characterizes the spatial correlation on the wavefront. This parameter $\alpha$ can be computed using the empirical model in~\cite[(13)]{yektakhah2024model}:
\begin{equation}
\alpha 
= \exp\!\left(
-a_1 \left(\frac{R}{a_2 R + 1}\right)
\left(\frac{d}{a_3d + 1}\right)
\right),
\label{eq:empirical_alpha}
\end{equation}
where the empirical coefficients $a_1$, $a_2$, and $a_3$ depend on the rain rate and operating frequency, and their values taken from~\cite[Table~II]{yektakhah2024model}. This empirical model is reported to be valid for $R \leq 500\,\mathrm{m}$ and $0.1 \leq d \leq 8$~\cite{yektakhah2024model}.

\begin{figure}[t]
\centering
\includegraphics[width=\columnwidth]{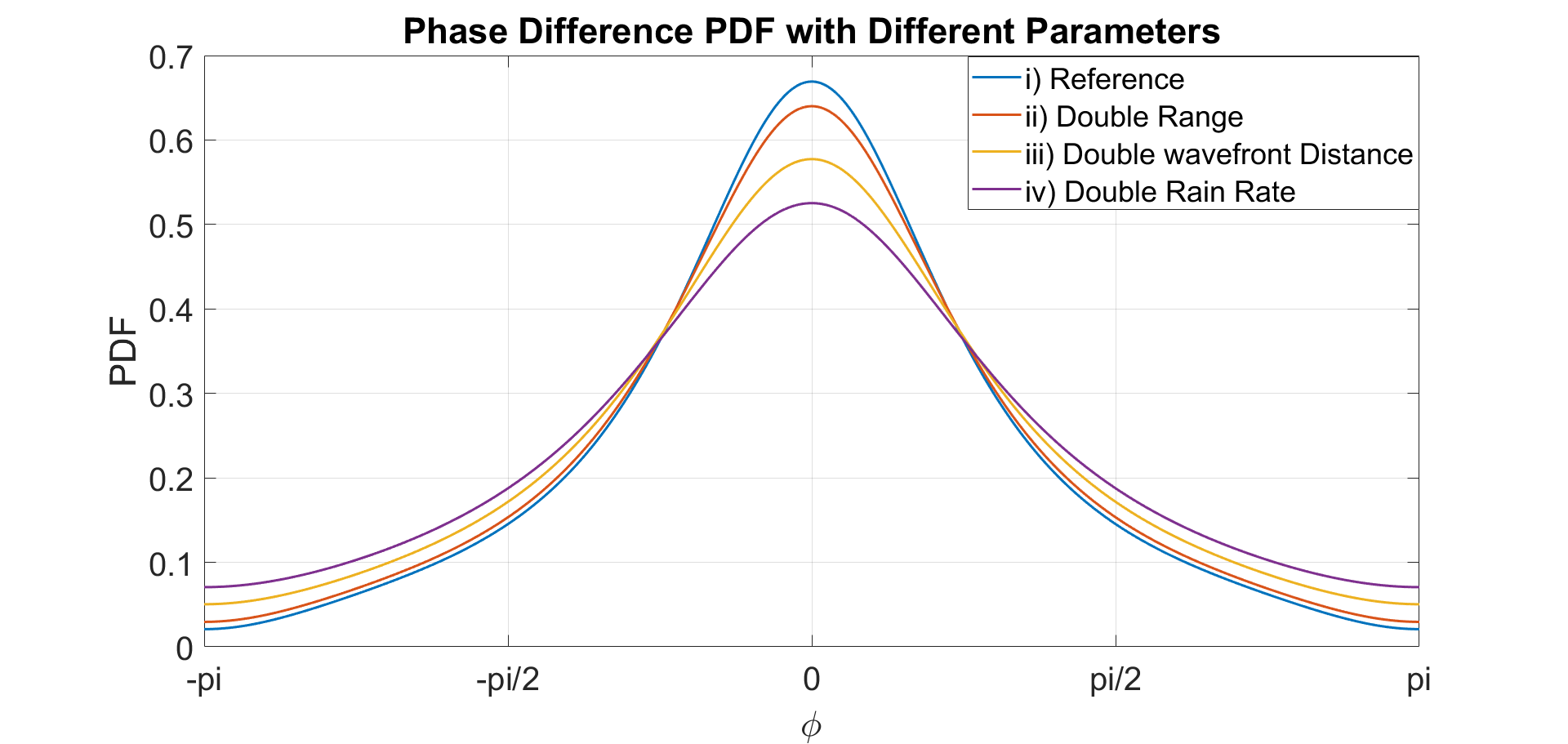}
\caption{Phase-difference pdf under different propagation conditions.}
\label{pdf_demo}
\end{figure}

\begin{figure}[t]
\centering
\includegraphics[width=\columnwidth]{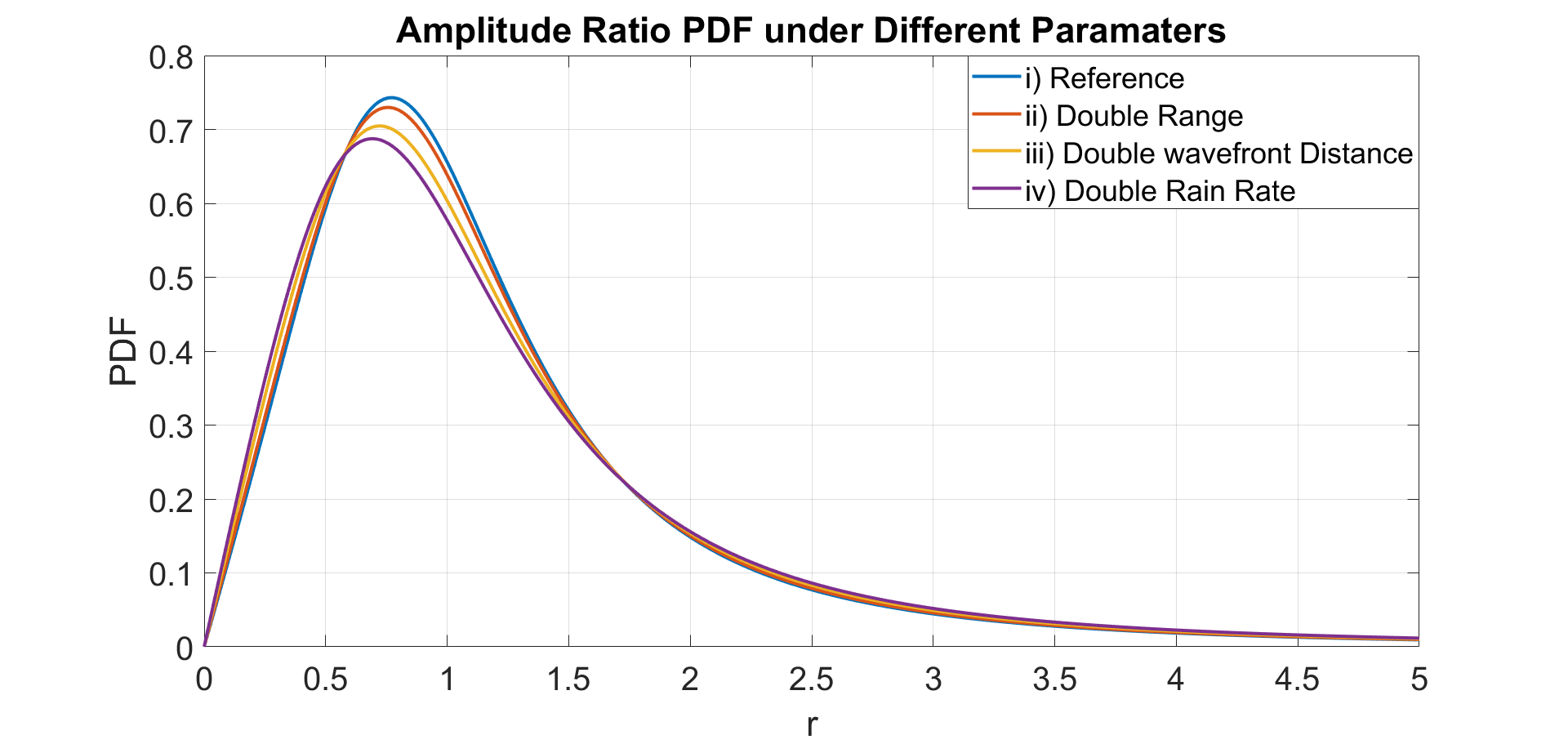}
\caption{Magnitude-ratio pdf under different propagation conditions.}
\label{ratio}
\end{figure}

Under this setup, and following~\cite{yektakhah2024model,sarabandi1992derivation}, simplified probability density functions (pdfs) for the phase difference $\phi=\angle\{E_1\}-\angle\{E_2\}$ and the magnitude ratio $r=|E_1/E_2|$ between two points separated by $d$ are given in~\cite[(9a), (9b)]{yektakhah2024model}. Figures~\ref{pdf_demo} and~\ref{ratio} illustrate these pdfs under different propagation conditions. Four cases are evaluated, with parameters summarized in Table~\ref{parameter_table}. The reference case (i) uses $d=4$, $R=200\,\mathrm{m}$, and a rain rate of $25\,\mathrm{mm/hr}$. In case (ii), the range is increased to $400\,\mathrm{m}$; in case (iii), the separation is increased to $8$; and in case (iv), the rain rate is increased to $50\,\mathrm{mm/hr}$, while the remaining parameters are held fixed.

As $R$, $d$, or the rain rate increases, both phase and amplitude fluctuations become more pronounced. Specifically, the phase-difference pdf becomes less concentrated around $0^\circ$, and the magnitude-ratio pdf shifts away from unity. Moreover, the phase-difference pdfs remain symmetric about zero, consistent with the assumed real-valued cross-correlation in~\eqref{eq:corss_correlation}.

%% file: measurement_model.tex
\section{Measurement Model}
\label{sec:meas_model}

We assume far-field propagation and consider a uniform linear array (ULA) with $M$ antennas and inter-element spacing $d_0$ measured in wavelengths. 
As illustrated in Fig.~\ref{array_model}, the rain-induced electric-field fluctuations introduced in Sec.~\ref{sec:2} are assumed to occur across the incident plane wave when it reaches the first array element encountered along the propagation path. 
We refer to this incident field as the \emph{reference plane wavefront}. 
Fluctuations beyond this reference plane are assumed negligible. 
Moreover, to ensure the validity of the empirical correlation model in~\eqref{eq:empirical_alpha}, the total array aperture is assumed not to exceed $8\lambda_0$~\cite{yektakhah2024model}, where $\lambda_0$ is the wavelength of the narrowband signal. 
This corresponds to at most 17 antennas under half-wavelength spacing.

\subsection{Snapshot-Domain Model}

We first connect the wavefront model in Sec.~\ref{sec:2} to array observations. 
Let $P_m$ denote the location of the $m$th sensor, and identify the received baseband signal $y_m(t)$ with the complex electric field $E(P_m,t)$. 
Substituting~\eqref{eq:normalized_field} yields
\begin{equation}
y_m(t)
=
\bigl(1+E_n(P_m,t)\bigr)E_{\mathrm{mean}}(P_m),
\label{eq:ym_rain}
\end{equation}
where $E_{\mathrm{mean}}(P_m)$ denotes the coherent, distortion-free field contribution at $P_m$. 
Accordingly, we define the multiplicative distortion at sensor $m$ as
\begin{equation}
b_m(t)
\triangleq
1+E_n(P_m,t),
\label{eq:b_m}
\end{equation}
which is consistent with the assumption in~\cite{yektakhah2024model}. This convention differs from our previous implementation~\cite{yan2025robustcovariance}, where the distortion term was taken as $b_m(t)=E_n(P_m,t)$.

\begin{figure}[t]
\centering
\includegraphics[width=\columnwidth]{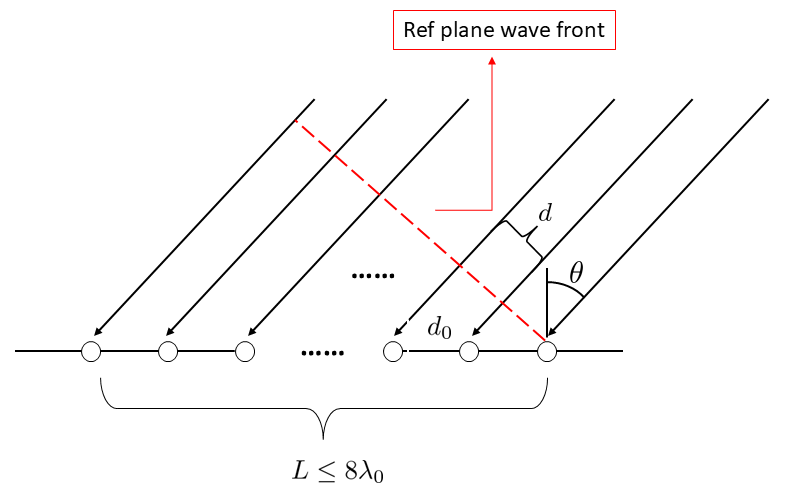}
\caption{Illustration of the rain-distorted ULA measurement model and the reference plane wavefront.}
\label{array_model}
\end{figure}

For a narrowband source signal $s(t)$ impinging from direction $\theta$, the received signal at the $m$th array element is modeled as
\begin{equation}
y_m(t)
=
\beta\,s(t)\,a_m(\theta)b_m(t)+n_m(t),
\label{eq:rx_m1}
\end{equation}
where $\beta$ is the complex path-loss coefficient, including both the range-dependent propagation loss and the additional attenuation induced by rain along the propagation path, $a_m(\theta)=e^{j2\pi (m-1)d_0\sin(\theta)}$, $m=1,\dots,M$, is the $m$th entry of the distortion-free steering vector, and $n_m(t)$ is Gaussian and spatially white additive noise. 
Writing $b_m(t)=|b_m(t)|e^{j\phi_m(t)}$,~\eqref{eq:rx_m1} can equivalently be expressed as
\begin{equation}
y_m(t)
=
\beta\,s(t)|b_m(t)|
e^{j\left(2\pi (m-1)d_0\sin(\theta)+\phi_m(t)\right)}
+n_m(t),
\label{eq:rx_m2}
\end{equation}
where $|b_m(t)|$ and $\phi_m(t)$ denote the distortion amplitude and phase, respectively. 
Both are assumed to vary across snapshots, reflecting the random-medium nature of the propagation. 
We further assume that, for each sensor $m$, $\{b_m(t)\}$ are independent and identically distributed across snapshots.

Stacking the sensor outputs gives the single-source vector model
\begin{equation}
\mathbf{y}(t)
=
\beta s(t)
\bigl(\mathbf{a}(\theta)\odot\mathbf{b}(t)\bigr)
+\mathbf{n}(t),
\label{eq:concise_receive_model}
\end{equation}
where $\mathbf{y}(t)\in\mathbb{C}^{M}$, $\mathbf{n}(t)\in\mathbb{C}^{M}$, $\odot$ denotes the Hadamard product,
\begin{equation}
\mathbf{a}(\theta)
=
[1,e^{j2\pi d_0\sin(\theta)},\dots,e^{j2\pi(M-1)d_0\sin(\theta)}]^T,
\label{eq:steering_vector}
\end{equation}
and $\mathbf{b}(t)=[b_1(t),\dots,b_M(t)]^T$.
We also define the unit-power steering covariance
\begin{equation}
\mathbf{S}(\theta)
\triangleq
\mathbf{a}(\theta)\mathbf{a}(\theta)^H,
\label{eq:S_def}
\end{equation}
whose entries have unit modulus and whose phase encodes the DoA.

We now extend~\eqref{eq:concise_receive_model} to $N$ mutually uncorrelated sources located at directions $\{\theta_n\}_{n=1}^{N}$. 
The rain-distorted multi-source snapshot model is
\begin{equation}
\mathbf{y}(t)
=
\sum_{n=1}^{N}
\beta_n s_n(t)
\bigl(\mathbf{a}(\theta_n)\odot\mathbf{b}(\theta_n,t)\bigr)
+\mathbf{n}(t),
\label{eq:multi_measurement}
\end{equation}
where $s_n(t)$ denotes the $n$th source signal, $\beta_n$ is its complex path-loss coefficient, and $\mathbf{b}(\theta_n,t)\in\mathbb{C}^{M}$ is the distortion vector associated with the wavefront arriving from $\theta_n$. 

The distortion is angle-dependent because the effective sensor-pair separation on the reference plane wavefront varies with incidence angle. 
For an element separation $|m-\ell|d_0$ along the array axis, the corresponding reference-plane wavefront separation is
\begin{equation}
d_{|m-\ell|}(\theta_n)
=
|m-\ell|d_0\cos(\theta_n).
\label{eq:d_theta}
\end{equation}
Consequently, the correlation parameter $\alpha$ in~\eqref{eq:empirical_alpha}, and thus the statistics of $\mathbf{b}(\theta_n,t)$, depend on $\theta_n$ through $d_{|m-\ell|}(\theta_n)$.

\subsection{Covariance-Domain Model}
For estimation, we will primarily work in the covariance domain. Assume that the sources are zero-mean, mutually uncorrelated, and independent of the noise, with
\[
\mathbf{R}_s=\mathbb{E}\{\mathbf{s}(t)\mathbf{s}(t)^H\}
=\mathrm{diag}(\sigma_1^2,\dots,\sigma_N^2),
\]
where $\mathbf{s}(t)=[s_1(t),\dots,s_N(t)]^T$, and $\sigma_n^2=\mathbb{E}\{|s_n(t)|^2\}$ denotes the power of the $n$th source. Also assume that $\mathbf{n}(t)$ is zero-mean with covariance $\mathbf{R}_n=\mathbb{E}\{\mathbf{n}(t)\mathbf{n}(t)^H\}$.
Then the covariance of~\eqref{eq:multi_measurement} can be written as
\begin{equation}
\mathbf{R}_y=\sum_{n=1}^{N}|\beta_n|^2\bigl(\mathbf{R}_x(\theta_n)\odot\mathbf{R}_b(\theta_n)\bigr)+\mathbf{R}_n,
\label{eq:covariance_measurement}
\end{equation}
where the distortion-free rank-one source covariance for the $n$th source is
\begin{equation}
\mathbf{R}_x(\theta_n)
=
\sigma_n^2\mathbf{S}(\theta_n)
=
\sigma_n^2\,\mathbf{a}(\theta_n)\mathbf{a}(\theta_n)^H,
\label{eq:Rx_def}
\end{equation}
and $\mathbf{S}(\theta_n)$ is the corresponding unit-power steering covariance defined in~\eqref{eq:S_def}. Moreover, $\mathbf{R}_b(\theta_n)\in\mathbb{C}^{M\times M}$ denotes the rain-induced distortion covariance associated with direction $\theta_n$, defined entry-wise by
\begin{equation}
\bigl[\mathbf{R}_b(\theta_n)\bigr]_{m\ell}\triangleq
\mathbb{E}\bigl\{b_{n,m}(t)\,b_{n,\ell}^*(t)\bigr\},
\qquad
1\le m,\ell\le M.
\label{eq:Rb_def}
\end{equation}

Using~\eqref{eq:b_m} together with the second-order statistics in Sec.~\ref{sec:2}, the distortion covariance admits the form
\begin{equation}
\bigl[\mathbf{R}_b(\theta_n)\bigr]_{m\ell}=1+\mathbb{E}\{E_{n}(P_m,t)\,E_{n}(P_\ell,t)^*\}.
\label{eq:Rb_from_En}
\end{equation}
In particular, under the model in Sec.~\ref{sec:2},
\begin{equation}
\bigl[\mathbf{R}_b(\theta_n)\bigr]_{m\ell}=
\begin{cases}
1+2\lambda_{11}, & m=\ell,\\[2pt]
1+2\lambda_{11}\,\alpha_{|m-\ell|}(\theta_n), & m\neq \ell,
\end{cases}
\label{eq:Rb_piecewise}
\end{equation}
where $\alpha_{|m-\ell|}(\theta_n)$ is evaluated by~\eqref{eq:empirical_alpha} using the effective separation on the reference wavefront of \eqref{eq:d_theta}. Therefore, $\mathbf{R}_b(\theta_n)$ depends on $\theta_n$ through the angle-dependent separation, and it carries the rain-rate dependence through the empirical parameters $(a_1,a_2,a_3)$ in~\eqref{eq:empirical_alpha}. This dependence propagates to the observed covariance $\mathbf{R}_y$ in~\eqref{eq:covariance_measurement}, forming the basis for the structured covariance-matching estimators developed in the next section.

%% file: doa_covariance_matching.tex
\section{DoA Estimation Using Covariance Matching}
\label{sec:covariance_matching}

Generalized least squares (GLS), also known as covariance-matching estimation techniques (COMET), provides a principled framework for matching the sample covariance matrix with a parametric covariance model~\cite{kariya2004generalized,OTTERSTEN1998185}. 
This approach is widely used in array signal processing to exploit second-order statistics for parameter estimation.

Consider the sample covariance $\hat{\mathbf{R}}_y$ formed from $T$ independent snapshots of the array output $\mathbf{y}(t)\in\mathbb{C}^{M}$. 
Our goal is to estimate the DoAs $\boldsymbol{\theta}=[\theta_1,\dots,\theta_N]^T$ from $\hat{\mathbf{R}}_y$, while treating the weather-induced distortion covariances as nuisance parameters.

Starting from the covariance model in~\eqref{eq:covariance_measurement}, the GLS criterion can be written as
\begin{equation}
\begin{aligned}
&(\hat{\boldsymbol{\theta}},
\{\hat{\mathbf{R}}_{b,n}\}_{n=1}^{N},
\hat{\boldsymbol{\beta}},
\hat{\boldsymbol{\sigma}}^{2})
=
\operatorname*{arg\,min}_{\boldsymbol{\theta},\,\{\mathbf{R}_{b,n}\}_{n=1}^{N},\,\boldsymbol{\beta},\,\boldsymbol{\sigma}^{2}}
\\
&\quad
\left\|
\hat{\mathbf{R}}_y
-
\sum_{n=1}^{N}
|\beta_n|^2
\bigl[
\mathbf{R}_{x}(\theta_n)\odot\mathbf{R}_{b,n}
\bigr]
\right\|_{\mathbf{W}}^{2},
\end{aligned}
\label{eq:GLS_cost_explicit_beta}
\end{equation}
where $\mathbf{R}_{b,n}\triangleq\mathbf{R}_b(\theta_n)$, $\boldsymbol{\beta}=[\beta_1,\dots,\beta_N]^T$, $\boldsymbol{\sigma}^{2}=[\sigma_1^2,\dots,\sigma_N^2]^T$, and $\mathbf{R}_{x}(\theta_n)$ is defined in~\eqref{eq:Rx_def}. 
The weighted norm is defined as
\begin{equation}
\lVert \mathbf{X} \rVert_{\mathbf{W}}^{2}
=
\mathrm{vec}(\mathbf{X})^{H}\mathbf{W}^{-1}\mathrm{vec}(\mathbf{X}).
\label{eq:Wnorm_def}
\end{equation}
Since the covariance depends on $\beta_n$ and $\sigma_n^2$ only through the product $|\beta_n|^2\sigma_n^2$, these quantities cannot be separated from covariance data alone. 
We therefore use different reparameterizations in the multi- and single-source cases.

For the multi-source case, the source-dependent path loss is absorbed into the corresponding source covariance. 
Define the attenuated source signal and power as
\begin{equation}
\tilde{s}_n(t)
\triangleq
\beta_n s_n(t),
\qquad
\tilde{\sigma}_n^2
\triangleq
\mathbb{E}\{|\tilde{s}_n(t)|^2\}
=
|\beta_n|^2\sigma_n^2,
\label{eq:stilde_sigma_tilde_def}
\end{equation}
and the attenuated source covariance as
\begin{equation}
\tilde{\mathbf{R}}_{x}(\theta_n)
\triangleq
\tilde{\sigma}_n^2
\mathbf{S}(\theta_n)
=
|\beta_n|^2\mathbf{R}_{x}(\theta_n).
\label{eq:Rxtilde_def}
\end{equation}
Then~\eqref{eq:GLS_cost_explicit_beta} becomes
\begin{equation}
\begin{aligned}
&(\hat{\boldsymbol{\theta}},
\{\hat{\mathbf{R}}_{b,n}\}_{n=1}^{N},
\hat{\tilde{\boldsymbol{\sigma}}}^{2})
=
\operatorname*{arg\,min}_{\boldsymbol{\theta},\,\{\mathbf{R}_{b,n}\}_{n=1}^{N},\,\tilde{\boldsymbol{\sigma}}^{2}}
\\
&\quad
\left\|
\hat{\mathbf{R}}_y
-
\sum_{n=1}^{N}
\bigl[
\tilde{\mathbf{R}}_x(\theta_n)\odot\mathbf{R}_{b,n}
\bigr]
\right\|_{\mathbf{W}}^{2},
\end{aligned}
\label{eq:GLS_cost}
\end{equation}
where $\tilde{\boldsymbol{\sigma}}^2=[\tilde{\sigma}_1^2,\dots,\tilde{\sigma}_N^2]^T$. 

For the single-source case, the same scalar ambiguity can instead be absorbed into the distortion nuisance term. 
Using $\mathbf{R}_x(\theta)=\sigma^2\mathbf{S}(\theta)$, the single-source signal covariance contribution satisfies
\begin{equation}
|\beta|^2
\bigl[
\mathbf{R}_{x}(\theta)\odot\mathbf{R}_{b}
\bigr]
=
\mathbf{S}(\theta)
\odot
\tilde{\mathbf{R}}_{b},
\label{eq:single_beta_absorption}
\end{equation}
where
\begin{equation}
\tilde{\mathbf{R}}_{b}
\triangleq
|\beta|^2\sigma^2\mathbf{R}_{b}.
\label{eq:Rbtilde_def}
\end{equation}
Here $\tilde{\mathbf{R}}_{b}$ is a scaled distortion covariance that includes both the path loss and the source power. 
The single-source GLS criterion therefore reduces to
\begin{equation}
(\hat{\theta},\hat{\tilde{\mathbf{R}}}_{b})
=
\arg\min_{\theta,\tilde{\mathbf{R}}_{b}}
\left\|
\hat{\mathbf{R}}_y
-
\mathbf{S}(\theta)
\odot
\tilde{\mathbf{R}}_{b}
\right\|_{\mathbf{W}}^{2}.
\label{eq:WGLS_single_cost}
\end{equation}

In the large-sample regime, the asymptotically optimal weighting is given by~\cite{OTTERSTEN1998185}
\begin{equation}
\hat{\mathbf{W}}
=
\hat{\mathbf{R}}_{y}^{T}\otimes\hat{\mathbf{R}}_{y},
\label{eq:opt_weight}
\end{equation}
where $\otimes$ denotes the Kronecker product.

To rewrite~\eqref{eq:GLS_cost} in vectorized WLS form, define
\begin{equation}
\hat{\mathbf{r}}
=
\mathrm{vec}(\hat{\mathbf{R}}_{y}),
\label{eq:rhat_covmatch}
\end{equation}
and
\begin{equation}
\begin{aligned}
&\mathbf{r}_{\rm MS}
\!\left(
\boldsymbol{\theta},
\tilde{\boldsymbol{\sigma}}^2,
\{\mathbf{R}_{b,n}\}_{n=1}^{N}
\right)
=\\ &
\mathrm{vec}\!\left(
\sum_{n=1}^{N}
\left[
\tilde{\mathbf{R}}_x(\theta_n)\odot\mathbf{R}_{b,n}
\right]
\right).
\label{eq:r_ms_def}  
\end{aligned}
\end{equation}
The multi-source WLS criterion becomes
\begin{align}
(\hat{\boldsymbol{\theta}},
\hat{\tilde{\boldsymbol{\sigma}}}^{2},
\{\hat{\mathbf{R}}_{b,n}\}_{n=1}^{N})
&=
\operatorname*{arg\,min}_{\boldsymbol{\theta},\,\tilde{\boldsymbol{\sigma}}^{2}\ge\mathbf{0},\,\{\mathbf{R}_{b,n}\}_{n=1}^{N}} \\&
\bigl(
\hat{\mathbf{r}}-\mathbf{r}_{\rm MS}
\bigr)^{H}
\hat{\mathbf{W}}^{-1}
\bigl(
\hat{\mathbf{r}}-\mathbf{r}_{\rm MS}
\bigr).
\label{eq:WGLS_quadform}
\end{align}

For the single-source case, define
\begin{equation}
\mathbf{r}_{\rm SS}(\theta,\tilde{\mathbf{R}}_b)
=
\mathrm{vec}
\left(
\mathbf{S}(\theta)
\odot
\tilde{\mathbf{R}}_b
\right).
\label{eq:r_ss_def}
\end{equation}
Then~\eqref{eq:WGLS_single_cost} can be written as
\begin{align}
(\hat{\theta},\hat{\tilde{\mathbf{R}}}_{b})
&=
\arg\min_{\theta,\tilde{\mathbf{R}}_{b}}
\bigl(
\hat{\mathbf{r}}
-
\mathbf{r}_{\rm SS}(\theta,\tilde{\mathbf{R}}_{b})
\bigr)^{H}
\hat{\mathbf{W}}^{-1}
\nonumber\\[-1mm]
&\hspace{22mm}\times
\bigl(
\hat{\mathbf{r}}
-
\mathbf{r}_{\rm SS}(\theta,\tilde{\mathbf{R}}_{b})
\bigr).
\label{eq:WGLS_single_quadform}
\end{align}

As a simpler special case, choosing $\mathbf{W}=\mathbf{I}$ yields the unweighted LS criterion. 

The unweighted formulation implicitly treats all covariance entries as equally reliable, whereas $\hat{\mathbf{R}}_y$ generally exhibits a nonuniform variance structure. 
The weighting in~\eqref{eq:opt_weight} accounts for this heteroscedasticity and is asymptotically equivalent to maximum-likelihood weighting under Gaussian measurements~\cite{OTTERSTEN1998185}. 
While WLS often improves estimation accuracy relative to LS, it increases computational cost due to the construction and inversion of $\hat{\mathbf{W}}$ and the associated matrix--vector operations.

In the next section, we develop algorithmic strategies for solving the WLS and LS optimization problems efficiently for both the single- and multi-source cases.

%% file: calibration.tex
\section{Calibration Algorithms}
\label{sec:calibration}
In this section, we address the optimization problem in~\eqref{eq:GLS_cost} for estimating the DoAs $\hat{\boldsymbol{\theta}}$. Here, we first generalize the single-source procedure from~\cite{yan2025robustcovariance} to a weighted LS (WLS) criterion, and then develop a new approach that extends the calibration and estimation framework to the multi-source scenario.

\subsection{Single-Source Calibration}

For a ULA, the unit-power steering covariance $\mathbf{S}(\theta)=\mathbf{a}(\theta)\mathbf{a}(\theta)^H$ possesses a positive semidefinite Hermitian Toeplitz (HT) structure and has unit-modulus entries. In our approach, we relax the physically parameterized distortion model in \eqref{eq:corss_correlation}--\eqref{eq:b_m} and instead assume only that the scaled distortion covariance $\tilde{\mathbf{R}}_{b}$ is an (unknown) real-valued Toeplitz matrix. Since both $\mathbf{S}(\theta)$ and $\tilde{\mathbf{R}}_{b}$ are Toeplitz, their Hadamard product $\mathbf{R}_{\mathrm{HT}}=\mathbf{S}(\theta)\odot\tilde{\mathbf{R}}_{b}$ is also HT. The proposed calibration method therefore begins by estimating the combined matrix $\mathbf{R}_{\mathrm{HT}}$ and subsequently decouples it by exploiting that $\mathbf{S}(\theta)$ has unit-modulus complex entries carrying phase information, whereas $\tilde{\mathbf{R}}_{b}$ is real-valued Toeplitz. Once $\mathbf{S}(\theta)$ is recovered, subspace-based DoA estimators such as MUSIC can be applied.

Because $\mathbf{R}_{\mathrm{HT}}$ is HT and positive semidefinite, it can be completely described by $(2M-1)$ real-valued parameters. In particular,
\begin{equation}
\begin{aligned}
\mathbf{R}_{\mathrm{HT}}
&= \sum_{m=0}^{2M-2} c_m \boldsymbol{\Sigma}_m \\
&= c_0 \mathbf{I}_M
+ \sum_{m=1}^{M-1}
\left(
c_m \mathbf{T}_m
+
c_{m+M-1}\tilde{\mathbf{T}}_m
\right),
\end{aligned}
\label{R_decomposition}
\end{equation}
where $\mathbf{T}_m$ contains ones on subdiagonals $\pm m$ and zeros elsewhere, and $\tilde{\mathbf{T}}_m$ contains $j$ on subdiagonal $+m$ and $-j$ on subdiagonal $-m$~\cite{romero2015compression,shan1985spatial}.

Substituting~\eqref{R_decomposition} into the WLS criterion and defining
\begin{equation}
\mathbf{c} = [\,c_0,\,c_1,\,\dots,\,c_{2M-2}\,]^{T},
\end{equation}
we obtain
\begin{equation}
\hat{\mathbf{c}}_{\mathrm{WLS}}
=
\arg\min_{\mathbf{c}}
\left\|
\hat{\mathbf{R}}_{y}-\sum_{m=0}^{2M-2} c_m \boldsymbol{\Sigma}_m
\right\|_{\hat{\bf W}}^{2}.
\label{eq:c_wls_mat}
\end{equation}
Let
\begin{equation}
\hat{\mathbf{r}}=\mathrm{vec}(\hat{\mathbf{R}}_{y}),
\label{eq:rhat_def}
\end{equation}
and
\begin{equation}
\mathbf{V}
=
\big[
\mathrm{vec}(\boldsymbol{\Sigma}_0),\,
\mathrm{vec}(\boldsymbol{\Sigma}_1),\,
\dots,\,
\mathrm{vec}(\boldsymbol{\Sigma}_{2M-2})
\big],
\label{eq:V_def}
\end{equation}
where $\mathrm{vec}(\cdot)$ stacks the columns of its matrix argument into a vector. Then~\eqref{eq:c_wls_mat} is equivalent to
\begin{equation}
\hat{\mathbf{c}}_{\mathrm{WLS}}
=
\arg\min_{\mathbf{c}}
\left(
\hat{\mathbf{r}}-\mathbf{V}\mathbf{c}
\right)^{H}
\hat{\mathbf{W}}^{-1}
\left(
\hat{\mathbf{r}}-\mathbf{V}\mathbf{c}
\right),
\label{eq:c_wls_vec}
\end{equation}
with the closed-form solution:
\begin{equation}
\hat{\mathbf{c}}_{\mathrm{WLS}}
=
\left(\mathbf{V}^{H}\hat{\mathbf{W}}^{-1}\mathbf{V}\right)^{-1}
\mathbf{V}^{H}\hat{\mathbf{W}}^{-1}\hat{\mathbf{r}}.
\label{eq:c_wls_sol}
\end{equation}

The unweighted LS estimator is obtained from~\eqref{eq:c_wls_vec} by setting $\hat{\mathbf{W}}=\mathbf{I}$, yielding:
\begin{equation}
\hat{\mathbf{c}}_{\mathrm{LS}}
=
\left(\mathbf{V}^{H}\mathbf{V}\right)^{-1}\mathbf{V}^{H}\hat{\mathbf{r}}.
\label{eq:c_ls_sol}
\end{equation}

Once $\hat{\mathbf{c}}_{\rm{LS}}$ (or $\hat{\mathbf{c}}_{\rm{WLS}}$) is obtained, the estimated HT matrix $\hat{\mathbf{R}}_{\rm{HT}}$ is reconstructed via~\eqref{R_decomposition}. The decoupling step then separates the phase and magnitude components:
\begin{equation}
\hat{\mathbf{S}} = \exp\!\big(j\,\angle(\hat{\mathbf{R}}_{\rm{HT}})\big),
\end{equation}
\begin{equation}
\hat{\tilde{\mathbf{R}}}_b = |\hat{\mathbf{R}}_{\rm{HT}}|.
\end{equation}

The overall GLS-based calibration procedure is summarized in Algorithm~\ref{algorithm1}; depending on the choice of the weighting matrix, the update step can be implemented using either LS or WLS.

\begin{algorithm}[htb]
\caption{Single-Source Calibration}
\begin{algorithmic}[1]
\STATE \textbf{Input}: Sample covariance $\hat{\mathbf{R}}_y$, basis matrices $\{\boldsymbol{\Sigma}_m\}_{m=0}^{2M-2}$.
\STATE Form the basis matrix
\[
\mathbf{V}
=
\big[
\mathrm{vec}(\boldsymbol{\Sigma}_0),\dots,
\mathrm{vec}(\boldsymbol{\Sigma}_{2M-2})
\big].
\]
\STATE Choose the weighting matrix $\hat{\mathbf{W}}$:
\[
\hat{\mathbf{W}} =
\begin{cases}
\mathbf{I}, & \text{(LS)}\\[2pt]
\hat{\mathbf{R}}_y^{T} \otimes \hat{\mathbf{R}}_y, & \text{(WLS)}
\end{cases}.
\]
\STATE Compute the coefficient vector
\[
\hat{\mathbf{c}}
=
\big(\mathbf{V}^H \hat{\mathbf{W}}^{-1}\mathbf{V}\big)^{-1}
\mathbf{V}^H \hat{\mathbf{W}}^{-1} \,\hat{\mathbf{r}}.
\]
\STATE Reconstruct the Toeplitz matrix
\[
\hat{\mathbf{R}}_{\rm{HT}} = \sum_{m=0}^{2M-2} \hat{c}_m \boldsymbol{\Sigma}_m.
\]
\STATE Obtain the scaled distortion covariance and steering covariance estimates
\[
\hat{\tilde{\mathbf{R}}}_b = |\hat{\mathbf{R}}_{\rm{HT}}|, 
\qquad
\hat{\mathbf{S}} = \exp\!\big(j\,\angle(\hat{\mathbf{R}}_{\rm{HT}})\big).
\]
\end{algorithmic}
\label{algorithm1}
\end{algorithm}

\subsection{Multiple-Source Calibration}
\label{subsec:m-sources}

\begin{proposition}
\label{prop:multi_identifiability}
Consider the covariance model in~\eqref{eq:covariance_measurement}. For $N=1$, if $\mathbf{R}_b(\theta)$ is real-valued Toeplitz, and its scale is fixed, e.g., by a diagonal normalization, then the DoA and the distortion covariance are identifiable up to the source-power scaling. 
For $N\geq 2$, if each source is assigned an independent angle-dependent distortion covariance $\mathbf{R}_b(\theta_n)$, the model is in general not identifiable from $\mathbf{R}_y$ alone.
\end{proposition}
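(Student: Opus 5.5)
For the single-source part, I will work directly from the noise-free covariance contribution $\mathbf{R}_{\mathrm{HT}}=\mathbf{S}(\theta)\odot\tilde{\mathbf{R}}_b$ in \eqref{eq:single_beta_absorption}. I will treat the noise covariance as known, or as absorbed into the diagonal; the off-diagonal entries are what matter for $\theta$.

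Suppose two parameter pairs give the same covariance:
\[
\mathbf{S}(\theta)\odot\tilde{\mathbf{R}}_b=\mathbf{S}(\theta')\odot\tilde{\mathbf{R}}_b'.
\]
Taking the diagonal and using $[\mathbf{S}]_{mm}=1$ gives $[\tilde{\mathbf{R}}_b]_{mm}=[\tilde{\mathbf{R}}_b']_{mm}$. The diagonal normalization then fixes the common scale, which is exactly $|\beta|^2\sigma^2$ by \eqref{eq:Rbtilde_def}. This gives identifiability up to source-power scaling.

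Next I use the first superdiagonal, $m=1$, $\ell=2$. The entries of $\mathbf{S}$ have unit modulus and $\tilde{\mathbf{R}}_b$ is real, so
\[
|[\tilde{\mathbf{R}}_b]_{12}|=|[\tilde{\mathbf{R}}_b']_{12}|
\quad\text{and}\quad
e^{-j2\pi d_0\sin\theta}=\pm e^{-j2\pi d_0\sin\theta'},
\]
with the sign given by the sign of the real entry. Under the physical model the correlation $1+2\lambda_{11}\alpha>0$ by \eqref{eq:Rb_piecewise}, so the sign is $+$. With $d_0\le 1/2$ and $\theta\in(-\pi/2,\pi/2)$, this forces $\sin\theta=\sin\theta'$ and hence $\theta=\theta'$. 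Then $\tilde{\mathbf{R}}_b=\tilde{\mathbf{R}}_b'$ entrywise, because $\mathbf{S}$ has no zero entries.

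The main subtlety is the sign ambiguity. If negative real entries were allowed, $\theta$ would be determined only modulo a shift of $\sin\theta$ by $1/(2d_0)$. I will state the positivity of the lag-one entry explicitly, since it holds under the physical model, and note that for $d_0=1/2$ the residual ambiguity would otherwise map to endfire.

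For $N\ge2$, an explicit counterexample suffices because the claim is about generic non-identifiability. Take $N=2$ with distinct angles $\theta_1\neq\theta_2$ and unconstrained Hermitian PSD distortion covariances $\mathbf{R}_{b,n}$, one per source. The key observation is that $\mathbf{S}(\theta_n)$ has unit-modulus entries, so the map $\mathbf{X}\mapsto\mathbf{S}(\theta_n)\odot\mathbf{X}$ is invertible with inverse $\mathbf{S}(\theta_n)^{*}\odot\cdot$ (elementwise conjugate). Given any $\mathbf{R}_y$, I will pick a perturbation $\boldsymbol{\Delta}$ and set
\[
\mathbf{R}_{b,1}'=\mathbf{R}_{b,1}+\tilde\sigma_1^{-2}\,\mathbf{S}(\theta_1)^{*}\odot\boldsymbol{\Delta},\qquad
\mathbf{R}_{b,2}'=\mathbf{R}_{b,2}-\tilde\sigma_2^{-2}\,\mathbf{S}(\theta_2)^{*}\odot\boldsymbol{\Delta}.
\]
Then $\mathbf{R}_y$ in \eqref{eq:covariance_measurement} is unchanged.

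Choosing $\boldsymbol{\Delta}$ as a small real symmetric Toeplitz matrix keeps both $\mathbf{R}_{b,n}'$ Hermitian Toeplitz and, when $\mathbf{R}_{b,n}\succ0$ (as is the case for $\lambda_{11}>0$), positive definite. This gives a continuum of distortion parameters consistent with the same $\mathbf{R}_y$.

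A stronger version also shows angle non-identifiability. Take $\mathbf{R}_{b,1}'=\tilde\sigma_1^{-2}\,\mathbf{S}(\theta_1')^{*}\odot(\tilde{\mathbf{R}}_x(\theta_1)\odot\mathbf{R}_{b,1})$ for a different angle $\theta_1'$. This reproduces the same term with a wrong angle, and the result is still HT, though not real. Equivalently, via a parameter count, the $2N(M-1)+N$ nuisance parameters exceed the $2M-1$ real parameters of an HT $\mathbf{R}_y$ once $N\ge2$, so the Jacobian is rank-deficient.

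The obstacle I expect is ensuring the perturbed matrices remain admissible distortion covariances, i.e. PSD and, if required, real. I will handle this by the small-perturbation argument around a strictly positive definite $\mathbf{R}_{b,n}$, and by relying on the count/rank-deficiency argument for the "in general" qualifier.
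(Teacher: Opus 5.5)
Your single-source argument is correct and makes rigorous the paper's one-line justification. You rightly notice that realness alone does not suffice, so a positivity (sign) condition on the lag-one entry has to be added as a hypothesis. One correction: the ambiguity you exclude is not an endfire artifact. For $d_0=1/2$, every $\theta\neq0$ has a partner with $\sin\theta'=\sin\theta\mp1$ (e.g.\ $30^\circ$ and $-30^\circ$). The corresponding $\tilde{\mathbf{R}}_b'$ has lags $(-1)^k r_k$, i.e.\ it equals $\mathrm{diag}((-1)^{m-1})\,\tilde{\mathbf{R}}_b\,\mathrm{diag}((-1)^{m-1})$, which is still real, symmetric Toeplitz and PSD. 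So positivity does essential work and is not just a corner-case exclusion.

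The multi-source part has a genuine gap: your constructions leave the model class in which the $N=1$ versus $N\geq2$ contrast is drawn.
\begin{itemize}
\item For a non-diagonal real $\boldsymbol{\Delta}$, the matrix $\mathbf{S}(\theta_n)^{*}\odot\boldsymbol{\Delta}$ has lag-$k$ entries $e^{-j\omega_n k}\delta_k$ with $\omega_n=2\pi d_0\sin\theta_n$, so $\mathbf{R}_{b,n}'$ is complex.
\item Your wrong-angle construction and your count $2N(M-1)+N$ likewise presume complex HT distortions.
\item In that class identifiability already fails for $N=1$: take $\tilde{\mathbf{R}}_b'=\mathbf{S}(\theta')^{*}\odot\mathbf{S}(\theta)\odot\tilde{\mathbf{R}}_b$, a unitary diagonal congruence of $\tilde{\mathbf{R}}_b$. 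Counting the DoA too, that class has $(2M-1)+1=2M>2M-1$ parameters at $N=1$.
\end{itemize}
So your argument does not isolate anything specific to $N\geq2$, and it is inconsistent with the real-valued assumption your first part relies on. The paper counts in the real class instead: $NM$ real Toeplitz lags plus $N$ powers and $N$ DoAs, against the $2M-1$ real degrees of freedom of the HT matrix $\mathbf{R}_y$. That count is exceeded only when $N\geq2$.

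If you want an explicit real-valued counterexample, work lag by lag. For $1\le k\le M-1$, the equation $\tilde\sigma_1^2e^{j\omega_1k}r_k^{(1)}+\tilde\sigma_2^2e^{j\omega_2k}r_k^{(2)}=[\mathbf{R}_y]_{k+1,1}$ is a real $2\times2$ linear system in $(r_k^{(1)},r_k^{(2)})$. Its determinant is $\tilde\sigma_1^2\tilde\sigma_2^2\sin\bigl(k(\omega_2-\omega_1)\bigr)$, which is generically nonzero. Hence for every $(\theta_1',\theta_2')$ near the true DoAs there are real symmetric Toeplitz distortions that reproduce $\mathbf{R}_y$ exactly; keep the powers and zero lags unchanged, since $\mathbf{S}$ has unit diagonal. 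By continuity these distortions stay positive definite with positive entries. This shows that even the DoAs are non-identifiable within the physically admissible real class. Your $\boldsymbol{\Delta}$ construction stays real only for diagonal $\boldsymbol{\Delta}$, which merely shifts diagonal loading between sources and is excluded once each $\mathbf{R}_{b,n}$ is diagonally normalized.
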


\begin{IEEEproof}
For a single source, the real-valued Toeplitz structure of $\mathbf{R}_b(\theta)$ separates the phase progression induced by the steering covariance from the real distortion lags; after fixing the scale ambiguity, the DoA and the distortion coefficients can be recovered. 
In the multi-source case, however, $\mathbf{R}_y$ remains Hermitian Toeplitz and therefore contains only $2M-1$ real degrees of freedom, whereas the $N$ independent real-valued Toeplitz matrices $\{\mathbf{R}_b(\theta_n)\}_{n=1}^{N}$ already introduce $NM$ nuisance parameters, in addition to the source powers and DoAs. 
Hence the parameter-to-covariance mapping is generally non-injective unless additional structure, such as a common physical parameterization or the approximation in~\eqref{eq:colinar_specific}, is imposed.
\end{IEEEproof}

\subsubsection{Collinear Approximation for Multi-Source Calibration}
In practice, retaining the angle-dependent form of $\mathbf{R}_b$ makes the problem extremely challenging. Moreover, since the angle information within $\mathbf{R}_b$ has not been explicitly exploited, it is natural to ask how much variation $\mathbf{R}_b$ exhibits across different angles. To quantify the similarity between different $\mathbf{R}_b$ matrices, we introduce the cosine similarity coefficient:
\begin{equation}
\rho_{ij}
= \frac{\mathrm{vec}[\mathbf{R}_b(\theta_i)]^H  \mathrm{vec}[\mathbf{R}_b(\theta_j)]}{\lVert \mathbf{R}_b(\theta_i) \rVert_F \, \lVert \mathbf{R}_b(\theta_j) \rVert_F}.
\end{equation}
By scanning over different incident angles, we can evaluate $\rho_{ij}$ across angle pairs to assess the degree of similarity among the distortion matrices.

Figure~\ref{fig:cosim_heatmap} shows the heatmap of the cosine-similarity coefficient for all angle pairs in $[-90^\circ,\,90^\circ]$, under a rain rate of $50\,\mathrm{mm/hr}$, a propagation range of $200\,\mathrm{m}$, and an array with $M=8$ antennas. The cosine similarity exceeds $0.998$ for all angle pairs, indicating that the vectorized distortion covariance matrices are nearly collinear across the entire angular range. The result suggests that the two
vectors are nearly aligned in the Frobenius vector space. 
This motivates the proportional (collinear) approximation:
\begin{equation}
\mathbf{R}_b(\theta_i) \approx p_{ij}\,\mathbf{R}_b(\theta_j),
\label{eq:colinear_general}
\end{equation}
where $p_{ij}\in\mathbb{R}$ denotes an angle-pair-dependent proportionality coefficient.

By taking $\mathbf{R}_b(0^\circ)$, the distortion matrix corresponding to $0^\circ$, as a reference, distortion matrices of the incident angles can be approximated as scalar multiples of this reference matrix:
\begin{equation}
    \mathbf{R}_b(\theta_i) \approx p_{i}\,\mathbf{R}_b(0^ \circ), \quad i=1,\dots,N.
\label{eq:colinar_specific}
\end{equation}

Substituting~\eqref{eq:colinar_specific} into the covariance expression~\eqref{eq:covariance_measurement} yields:
\begin{equation}
\begin{aligned}
    \mathbf{R}_y
    &\approx \mathbf{R}_b(0^ \circ) \odot \Biggl(\sum_{n=1}^{N}p_n|\beta_n|^2\mathbf{R}_x(\theta_n)\Biggr) + \mathbf{R}_n.
\label{eq:cov_virtual_b}
\end{aligned}
\end{equation}

Using $\mathbf{R}_x(\theta_n)=\sigma_n^2\mathbf{S}(\theta_n)$, define the mixed undistorted covariance as
\begin{equation}
\begin{aligned}
    \mathbf{R}_{x,\mathrm{mix}}
    &\triangleq
    \sum_{n=1}^{N}p_n|\beta_n|^2\mathbf{R}_x(\theta_n) \\
    &=
    \sum_{n=1}^{N}p_n |\beta_n|^2\sigma_n^2\,\mathbf{S}(\theta_n).
\end{aligned}
\label{eq:Rx_mix_continuous}
\end{equation}
The scalar weights $p_n$ are not estimated separately. Instead, together with the path-loss and source-power factors, they are absorbed into the angular grid power vector introduced below.

Let $\{\vartheta_g\}_{g=1}^{G}$ be an angular grid spanning the field of view and define
\begin{equation}
\mathbf{A}=\big[\mathbf{a}(\vartheta_1),\,\dots,\,\mathbf{a}(\vartheta_G)\big].
\label{eq:grid_array_response}
\end{equation}
The mixed covariance is represented on this grid as
\begin{equation}
\mathbf{R}_x(\boldsymbol{\gamma})
=
\mathbf{A}\,\mathrm{diag}(\boldsymbol{\gamma})\,\mathbf{A}^{H},
\label{eq:Rx_gamma_def}
\end{equation}
where $\boldsymbol{\gamma}\in\mathbb{R}_{+}^{G}$ directly absorbs the unknown products $p_n|\beta_n|^2\sigma_n^2$. Thus, if a source at $\theta_n$ lies on the grid point $\vartheta_g$, then $\gamma_g=p_n|\beta_n|^2\sigma_n^2$, while inactive grid points have zero power.

\begin{figure}[t]
    \centering
    \includegraphics[width=\linewidth]{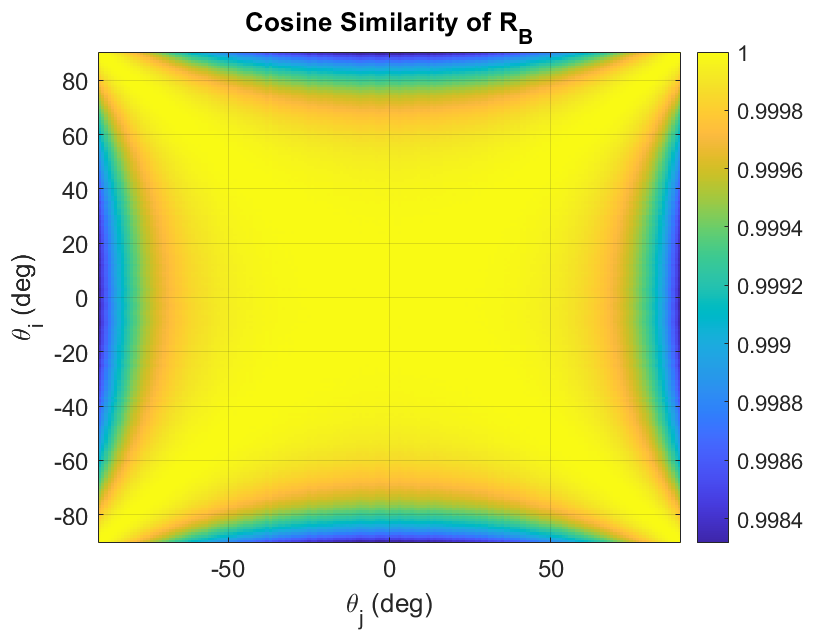}
    \caption{Heatmap of cosine similarity over all angle pairs.}
    \label{fig:cosim_heatmap}
\end{figure}

Therefore, the approximate multi-source covariance-matching problem can be written directly in terms of the grid power vector as
\begin{equation}
\begin{aligned}
(\hat{\boldsymbol{\gamma}},\hat{\mathbf{R}}_b(0^\circ))
&=
\operatorname*{arg\,min}_{\boldsymbol{\gamma}\ge\mathbf{0},\,\mathbf{R}_b(0^\circ)} \\
&
\left\|
\hat{\mathbf{R}}_{y}
-
\mathbf{R}_b(0^\circ)\odot \mathbf{R}_{x}(\boldsymbol{\gamma})
\right\|_{\hat{\mathbf{W}}}^{2}.
\label{eq:GLS_approx}
\end{aligned}
\end{equation}
The active entries of $\boldsymbol{\gamma}$ determine the estimated DoAs, while their magnitudes represent the effective reweighted powers.

The entries of $\mathbf{R}_{x,\mathrm{mix}}$, or equivalently $\mathbf{R}_{x}(\boldsymbol{\gamma})$, are no longer unit-modulus. Consequently, the phase--magnitude separation exploited in the single-source calibration is no longer applicable.

A viable alternative is to adopt a sparse covariance representation that absorbs this reweighting into a sparse power vector defined on an angular grid, thereby enabling sparse covariance reconstruction \cite{tibshirani1996regression,yang2018sparse}. In addition, the low-rank nature of $\mathbf{R}_{x,\mathrm{mix}}$ motivates complementary formulations based on nuclear-norm regularization and related structural constraints. Accordingly, we develop three algorithms for multi-source calibration: (i) an alternating-optimization method with an $\ell_1$-regularized (LASSO) power update, (ii) a joint LASSO-based formulation, and (iii) a joint formulation with nuclear-norm regularization.

These three methods are motivated by different trade-offs between fidelity to the original covariance-matching criterion and computational complexity. The alternating LASSO method remains closest to the original GLS/WLS objective by updating the power spectrum and distortion parameters in separate, structurally consistent subproblems; however, this iterative block-coordinate strategy can be computationally demanding. By contrast, the two joint formulations modify the original cost by introducing a distortion-compensated representation, which enables the power vector and distortion parameters to be estimated simultaneously within a single convex program. This joint structure typically reduces the number of outer iterations and simplifies implementation, at the expense of relying on an altered mismatch criterion that is more sensitive to modeling mismatch.

\subsubsection{Alternating-Optimization Method with LASSO}
Using the grid representation in~\eqref{eq:grid_array_response}--\eqref{eq:Rx_gamma_def}, the vector $\boldsymbol{\gamma}=[\gamma_1,\dots,\gamma_G]^T$ is expected to be sparse, and the indices of its nonzero entries indicate the active DoAs. Its nonzero magnitudes directly represent the effective reweighted powers $p_n|\beta_n|^2\sigma_n^2$ on the active grid points.

As before, we parameterize the distortion covariance using its real-valued Toeplitz structure. Using the reference direction $0 ^\circ$, we write
\begin{equation}
\mathbf{R}_b(0 ^\circ;\mathbf{c})
=
\sum_{m=0}^{M-1} c_m\,\mathbf{T}_m,
\end{equation}
where $\mathbf{c}=[c_0,\dots,c_{M-1}]^T$ and $\{\mathbf{T}_m\}$ are real-valued Toeplitz basis matrices.

To remove the scale ambiguity between $\mathbf{R}_x(\boldsymbol{\gamma})$ and $\mathbf{R}_b(0 ^\circ;\mathbf{c})$, we fix the zero-lag (diagonal) entry of the distortion covariance, i.e.,
\begin{equation}
\big[\mathbf{R}_b(0 ^\circ;\mathbf{c})\big]_{m,m} = c_0 = 1.
\label{eq:rb_diag_fix}
\end{equation}
In addition, we impose the natural non-negativity constraints $\boldsymbol{\gamma}\ge\mathbf{0}$ and $\mathbf{c}\ge\mathbf{0}$.

A LASSO-regularized formulation of~\eqref{eq:GLS_approx} is then
\begin{equation}
\begin{aligned}
(\hat{\boldsymbol{\gamma}},\hat{\mathbf c})&= \operatorname*{arg\,min}_{\boldsymbol{\gamma}\ge 0,\;\mathbf c \ge 0} \\&
\left\|
\hat{\mathbf{R}}_y
-
\mathbf{R}_x(\boldsymbol{\gamma})\odot \mathbf{R}_b(0 ^\circ;\mathbf{c})
\right\|_F^2
+\mu\left\| \boldsymbol{\gamma}\right\|_1,
\\
&\text{s.t.}\quad
c_0=1,
\label{eq:multi_lasso_alt}
\end{aligned}
\end{equation}
where $\mu>0$ controls the sparsity level of $\boldsymbol{\gamma}$.

With the above parameterizations, the cost in~\eqref{eq:multi_lasso_alt} is bi-convex with respect to $\boldsymbol{\gamma}$ and $\mathbf{c}$, which motivates an alternating-optimization strategy. Specifically, at iteration $k$, we update $\boldsymbol{\gamma}$ and $\mathbf{c}$ in turn while keeping the other variable fixed.

Step 1a ($\boldsymbol{\gamma}$-update):
For fixed $\mathbf{c}^{(k)}$, define
\begin{equation}
\mathbf{R}_b^{(k)} \triangleq \mathbf{R}_b(0 ^\circ;\mathbf{c}^{(k)}),
\end{equation}
and compute $\boldsymbol{\gamma}^{(k+1)}$ via
\begin{equation}
\begin{aligned}
\boldsymbol{\gamma}^{(k+1)}
&=
\arg\min_{\boldsymbol{\gamma}\ge \mathbf{0}} \\
&\left\|
\hat{\mathbf{R}}_y
-
\big(\mathbf{A}\,\mathrm{diag}(\boldsymbol{\gamma})\,\mathbf{A}^H\big)\odot \mathbf{R}_b^{(k)}
\right\|_F^2
+
\mu \|\boldsymbol{\gamma}\|_1.
\label{eq:gamma_step_alt}
\end{aligned}
\end{equation}
Equivalently, by vectorization,~\eqref{eq:gamma_step_alt} can be written as
\begin{equation}
\boldsymbol{\gamma}^{(k+1)}
=
\arg\min_{\boldsymbol{\gamma}\ge \mathbf{0}} 
\left\|\operatorname{vec}(\hat{\mathbf R}_y)-\mathbf\Phi^{(k)}\boldsymbol\gamma\right\|_2^2
+
\mu \|\boldsymbol{\gamma}\|_1,
\end{equation}
where the dictionary matrix $\mathbf{\Phi}^{(k)}\in\mathbb{C}^{M^2\times G}$ is defined as
\begin{equation}
\begin{aligned}
\mathbf{\Phi}^{(k)}
=&
\Big[
\mathrm{vec}\!\big((\mathbf{a}(\vartheta_1)\mathbf{a}^H(\vartheta_1))\odot \mathbf{R}_b^{(k)}\big),
 \\ &\dots,
\mathrm{vec}\!\big((\mathbf{a}(\vartheta_G)\mathbf{a}^H(\vartheta_G))\odot \mathbf{R}_b^{(k)}\big)
\Big].
\end{aligned}
\end{equation}

Step 1b (debiasing refit on the active support):
The $\ell_1$ penalty in~\eqref{eq:gamma_step_alt} introduces shrinkage on the nonzero entries of~$\boldsymbol{\gamma}$. Following the debiasing procedure introduced in~\cite{wright2009sparse}, we refit the coefficients on the support identified by the LASSO solution. Specifically, define the active index set
\begin{equation}
\mathcal{S}^{(k+1)}=\bigl\{\, g\in\{1,\dots,G\} \,\big|\, \gamma_g^{(k+1)}\neq 0 \,\bigr\}.
\label{eq:gamma_support}
\end{equation}
In practice, the condition ($\gamma_g \neq 0$) is implemented using a small numerical threshold. 

Let $\mathbf{\Phi}^{(k)}_{\mathcal{S}}$ denote the submatrix of $\mathbf{\Phi}^{(k)}$ containing the columns indexed by $\mathcal{S}^{(k+1)}$, and let $\boldsymbol{\gamma}_{\mathcal{S}}$ be the corresponding subvector. We then compute the debiased coefficients via the nonnegative least-squares refit
\begin{equation}
\bar{\boldsymbol{\gamma}}_{\mathcal{S}}^{(k+1)}
=
\arg\min_{\boldsymbol{\gamma}_{\mathcal{S}}\ge \mathbf{0}}
\left\|
\mathrm{vec}\!\big(\hat{\mathbf{R}}_{y}\big)
-
\mathbf{\Phi}^{(k)}_{\mathcal{S}}\,\boldsymbol{\gamma}_{\mathcal{S}}
\right\|_2^2.
\label{eq:gamma_debias_nnls}
\end{equation}
Finally, we form a full-length vector $\bar{\boldsymbol{\gamma}}^{(k+1)}\in\mathbb{R}_+^G$ by assigning the refitted values to the indices in $\mathcal S^{(k+1)}$ and setting all other entries to zero. This full-length vector is then used in the subsequent $\mathbf{c}$-update.

Step 2 ($\mathbf{c}$-update):
For the fixed full-length debiased vector $\bar{\boldsymbol{\gamma}}^{(k+1)}$, define
\begin{equation}
\mathbf{R}_x^{(k+1)}
\triangleq
\mathbf{R}_x(\bar{\boldsymbol{\gamma}}^{(k+1)}).
\end{equation}
Using the expansion
\begin{equation}
\mathbf{R}_x^{(k+1)}\odot \mathbf{R}_b(0 ^\circ;\mathbf{c})
=
\sum_{m=0}^{M-1}
c_m\big(\mathbf{R}_x^{(k+1)}\odot \mathbf{T}_m\big),
\end{equation}
the $\mathbf{c}$-update reduces to
\begin{equation}
\begin{aligned}
\mathbf{c}^{(k+1)}
&=
\arg\min_{\mathbf{c}\ge \mathbf{0}} \\&
\left\|
\hat{\mathbf{R}}_y
-
\sum_{m=0}^{M-1}
c_m\big(\mathbf{R}_x^{(k+1)}\odot \mathbf{T}_m\big)
\right\|_F^2
\quad \\&
\text{s.t.}\quad
c_0=1.
\label{eq:c_step_alt}
\end{aligned}
\end{equation}
Equation~\eqref{eq:c_step_alt} can be written in a standard vectorized least-squares form by stacking the basis matrices, similarly to~\eqref{eq:V_def}.

The overall algorithm alternates between the nonnegative LASSO update in~\eqref{eq:gamma_step_alt}, the debiasing refit in~\eqref{eq:gamma_debias_nnls}, and the constrained least-squares update in~\eqref{eq:c_step_alt} until convergence. The alternating LASSO algorithm is summarized in Algorithm 2.

\subsubsection{Joint Optimization Method with LASSO}
The alternating scheme updates $\boldsymbol{\gamma}$ and $\mathbf{c}$ sequentially. 
Alternatively, we use an approximation in which the element-wise inverse of the distortion covariance is applied to the sample covariance.
Let $\mathbf{Q}_b(0 ^\circ)\triangleq \mathbf{R}_b(0 ^\circ)^{\oslash}$ denote the Hadamard (element-wise) inverse of $\mathbf{R}_b(0 ^\circ)$, where the superscript $\oslash$ denotes element-wise reciprocation. 
Since $\mathbf{R}_b(0 ^\circ)$ is real symmetric Toeplitz, $\mathbf{Q}_b(0 ^\circ)$ is also real-valued symmetric Toeplitz and can be parameterized as
\begin{equation}
\mathbf{Q}_b(0 ^\circ;\mathbf{q})
=
\sum_{m=0}^{M-1} q_m\,\mathbf{T}_m,
\label{eq:Rb_inv_toeplitz}
\end{equation}
where $\mathbf{q}=[q_0,\dots,q_{M-1}]^T = [\frac{1}{{c}_0},\dots,\frac{1}{{c}_{M-1}}]^T$ and $\{\mathbf{T}_m\}$ are real-valued Toeplitz basis matrices.

Without additional regularization, the joint estimation may produce oscillatory and physically inconsistent inverse Toeplitz coefficients $\{q_m\}$, which can distort the compensated covariance and lead to inaccurate DoA estimates. 
To stabilize the estimation, we exploit the expected structure of the inverse distortion covariance. 
Since $\mathbf{Q}_b(0^\circ)$ is the element-wise inverse of $\mathbf{R}_b(0^\circ)$, the decay of the distortion correlation with sensor separation implies that $\{q_m\}$ should increase monotonically with the lag index $m$. 
We therefore impose a monotonicity constraint on $\mathbf{q}$ as an additional regularization, which has empirically been found to stabilize the estimation and suppress oscillatory solutions.

We then consider the distortion-compensated covariance mismatch and formulate the joint LASSO-regularized problem
\begin{equation}
\begin{aligned}
(\hat{\boldsymbol{\gamma}},\hat{\mathbf{q}})
&=\operatorname*{arg\,min}_{\boldsymbol{\gamma}\ge \mathbf{0},\,\mathbf{q}\ge \mathbf{0}} \\&
\left\|
\hat{\mathbf{R}}_y \odot \mathbf{Q}_b(0 ^\circ;\mathbf{q})
-
\mathbf{R}_x(\boldsymbol{\gamma})
\right\|_F^2
+
\mu\|\boldsymbol{\gamma}\|_1,
\\
&\text{s.t.}\quad
q_0=1,
q_0 \le q_1 \le \cdots \le q_{M-1},
\label{eq:joint_lasso_obj}
\end{aligned}
\end{equation}
where $\mu>0$ controls the sparsity of $\boldsymbol{\gamma}$. 
The constraint on $q_0$ fixes the global scale.
Problem~\eqref{eq:joint_lasso_obj} is convex in $(\boldsymbol{\gamma},\mathbf{q})$ and can be readily solved using standard convex optimization tools, e.g., CVX in MATLAB~\cite{cvx}.

After solving~\eqref{eq:joint_lasso_obj}, we apply the same debiasing refit on the active support as in the alternating method, but here we jointly refit both $\boldsymbol{\gamma}$ and $\mathbf{q}$. 
Specifically, we identify the active support of the LASSO solution and then resolve the joint problem with the $\ell_1$ term removed, i.e., using a pure $\ell_2$  objective while enforcing the same structural constraints on $\mathbf{q}$.
Let
\begin{equation}
\mathcal{S}
=
\bigl\{\, g\in\{1,\dots,G\}\,\big|\,\hat{\gamma}_g\neq 0 \,\bigr\},
\label{eq:joint_support}
\end{equation}
denote the active index set. In the refit, we optimize only the active subvector $\boldsymbol{\gamma}_{\mathcal{S}}$, whose entries are nonzero:
\begin{equation}
\begin{aligned}
(\hat{\boldsymbol{\gamma}}_{\mathcal{S}},\hat{\mathbf{q}})
&=
\operatorname*{arg\,min}_{\boldsymbol{\gamma}_{\mathcal{S}}\ge \mathbf{0},\,\mathbf{q}\ge \mathbf{0}}
\left\|
\hat{\mathbf{R}}_y \odot \mathbf{Q}_b(0 ^\circ;\mathbf{q})
-
\mathbf{R}_x(\boldsymbol{\gamma}_{\mathcal{S}})
\right\|_F^2, \\ 
&\text{s.t.}\quad
q_0=1,
q_0 \le q_1 \le \cdots \le q_{M-1},
\label{eq:joint_lasso_refit}
\end{aligned}
\end{equation}
Note that this refit step keeps the estimated support $\mathcal{S}$ fixed and therefore does not change the DoA estimates implied by the active grid locations. Its role is to obtain less biased estimates of the corresponding source powers (encoded in $\hat{\boldsymbol{\gamma}}_{\mathcal{S}}$) and of the inverse-distortion parameters $\hat{\mathbf{q}}$. In other words, the refit is optional if the goal is DoA localization only, but it is beneficial when accurate power and distortion estimates are also of interest.

The joint LASSO algorithm is summarized in Algorithm 3.

\begin{algorithm}[htb]
\caption{Multiple-Source Calibration via Alternating LASSO}
\begin{algorithmic}[1]
\STATE \textbf{Input}: Sample covariance $\hat{\mathbf{R}}_y$, angular grid $\{\vartheta_g\}_{g=1}^{G}$, array manifold
\[
\mathbf{A}=\big[\mathbf{a}(\vartheta_1),\dots,\mathbf{a}(\vartheta_G)\big],
\]
Toeplitz basis matrices $\{\mathbf{T}_m\}_{m=0}^{M-1}$, regularization parameter $\mu$.
\STATE \textbf{Initialize}: $\mathbf{c}^{(0)} \ge 0$ with $c_0^{(0)} = 1$.
\FOR{$k=0,1,\dots$ until convergence}
    \STATE Form
    \[
    \mathbf{R}_b(0 ^\circ;\mathbf{c}^{(k)})
    =
    \sum_{m=0}^{M-1} c_m^{(k)} \mathbf{T}_m .
    \]
    \STATE Update the sparse angular power vector by solving
\[
\begin{aligned}
\hat {\boldsymbol{\gamma}}^{(k+1)}
&=
\arg\min_{\boldsymbol{\gamma}\ge 0}
\;
\left\|
\hat{\mathbf{R}}_y
-
\mathbf{R}_x(\boldsymbol{\gamma})\odot
\mathbf{R}_b(0 ^\circ;\mathbf{c}^{(k)})
\right\|_F^2
\\
&\quad
+\mu \|\boldsymbol{\gamma}\|_1 .
\end{aligned}
\]
    where
    \[
    \mathbf{R}_x(\boldsymbol{\gamma})
    =
    \mathbf{A}\,\mathrm{diag}(\boldsymbol{\gamma})\,\mathbf{A}^H .
    \]
    \STATE Debias the estimate by removing the $\ell_1$ term and resolving the same problem on the active support of ${\hat {\boldsymbol{\gamma}}}^{(k+1)}$.
    \STATE Update the distortion coefficients by solving
    \[
    \mathbf{c}^{(k+1)}
    =
    \arg\min_{\mathbf{c}\ge 0}
    \left\|
    \hat{\mathbf{R}}_y
    -
    \mathbf{R}_x(\boldsymbol{\gamma}^{(k+1)})\odot
    \mathbf{R}_b(0 ^\circ;\mathbf{c})
    \right\|_F^2
    \]
    subject to
    \[
    c_0 = 1.
    \]
\ENDFOR
\STATE \textbf{Output}: $\hat{\boldsymbol{\gamma}}$, $\hat{\mathbf{c}}$, 
\[
\hat{\mathbf{R}}_b=\mathbf{R}_b(0 ^\circ;\hat{\mathbf{c}}),
\qquad
\hat{\mathbf{R}}_x=\mathbf{R}_x(\hat{\boldsymbol{\gamma}}).
\]
The estimated DoAs are given by the nonzero entries of~$\hat{\boldsymbol{\gamma}}$.
\end{algorithmic}
\label{alg:alt_lasso_multi}
\end{algorithm}

\begin{algorithm}[htb]
\caption{Multiple-Source Calibration via Joint LASSO}
\begin{algorithmic}[1]
\STATE \textbf{Input}: Sample covariance $\hat{\mathbf{R}}_y$, angular grid $\{\vartheta_g\}_{g=1}^{G}$, array manifold
\[
\mathbf{A}=\big[\mathbf{a}(\vartheta_1),\dots,\mathbf{a}(\vartheta_G)\big],
\]
Toeplitz basis matrices $\{\mathbf{T}_m\}_{m=0}^{M-1}$, regularization parameter $\mu$.
\STATE Parameterize the inverse distortion covariance as
\[
\mathbf{Q}_b(0 ^\circ;\mathbf{q})
=
\sum_{m=0}^{M-1} q_m \mathbf{T}_m .
\]
\STATE Parameterize the undistorted covariance on the angular grid:
\[
\mathbf{R}_x(\boldsymbol{\gamma})
=
\mathbf{A}\,\mathrm{diag}(\boldsymbol{\gamma})\,\mathbf{A}^H .
\]
\STATE Jointly estimate $(\boldsymbol{\gamma},\mathbf{q})$ by solving
\[
(\hat{\boldsymbol{\gamma}},\hat{\mathbf{q}})
=
\operatorname*{arg\,min}_{\boldsymbol{\gamma}\ge 0,\;\mathbf{q}\ge 0}
\left\|
\hat{\mathbf{R}}_y \odot \mathbf{Q}_b(0 ^\circ;\mathbf{q})
-
\mathbf{R}_x(\boldsymbol{\gamma})
\right\|_F^2
+
\mu\|\boldsymbol{\gamma}\|_1
\]
subject to
\[
q_0=1,
\qquad
q_0 \le q_1 \le \cdots \le q_{M-1}.
\]
\STATE \textbf{Debiasing refit (optional):} Debias the estimate by removing the $\ell_1$ term and resolving the same problem on the active support of $\hat{\boldsymbol{\gamma}}$.
\STATE \textbf{Output}: $\hat{\boldsymbol{\gamma}}$, $\hat{\mathbf{q}}$,
\[
\hat{\mathbf{R}}_x=\mathbf{R}_x(\hat{\boldsymbol{\gamma}}),
\qquad
\hat{\mathbf{Q}}_b=\mathbf{Q}_b(0 ^\circ;\hat{\mathbf{q}}).
\]
The estimated DoAs are given by the nonzero entries of~$\hat{\boldsymbol{\gamma}}$.
\end{algorithmic}
\label{alg:joint_lasso_multi}
\end{algorithm}

\subsubsection{Joint Optimization Method with Nuclear Norm}
In the joint LASSO formulation, $\mathbf{R}_x$ is parameterized on an angular grid via $\boldsymbol{\gamma}$. 
Alternatively, we may relax the undistorted signal covariance $\mathbf{R}_x$ as an HT covariance matrix. 
To avoid ambiguity while capturing the fact that a small number of sources implies a low-rank signal covariance, we promote low rank by adding a nuclear-norm penalty~\cite{recht2010guaranteed}.

As in the previous joint formulation, we parameterize the Hadamard inverse distortion covariance by a real-valued Toeplitz expansion \eqref{eq:Rb_inv_toeplitz}. We then solve the nuclear-norm-regularized joint estimation problem
\begin{equation}
\begin{aligned}
(\hat{\mathbf{R}}_x,\hat{\mathbf{q}})
&=\operatorname*{arg\,min}_{\mathbf{R}_x \succeq \mathbf{0},\,\mathbf{q}\ge \mathbf{0}} \\&
\left\|
\hat{\mathbf{R}}_y \odot \mathbf{Q}_b(0 ^\circ;\mathbf{q})
-
\mathbf{R}_x
\right\|_F^2
+
\epsilon\|\mathbf{R}_x\|_*,
\\
&\text{s.t.}\quad
q_0=1,
q_0 \le q_1 \le \cdots \le q_{M-1}, \\
&\text{and  $\mathbf{R}_x =  \sum_{m=0}^{2M-2} \eta_m \boldsymbol{\Sigma}_m,\; \eta_m\in\mathbb{R}.$}
\label{eq:joint_nuclear_obj}
\end{aligned}
\end{equation}
where $\epsilon>0$ is the weighting coefficient of the nuclear norm $\|\cdot\|_*$. \eqref{eq:joint_nuclear_obj} is convex and can be readily solved using standard convex optimization tools, e.g., CVX in MATLAB.

As before, after obtaining $(\hat{\mathbf{R}}_x,\hat{\mathbf{q}})$ from~\eqref{eq:joint_nuclear_obj}, we optionally perform a debiasing refit to mitigate the shrinkage induced by the nuclear-norm penalty. 
Specifically, we compute the eigendecomposition of $\hat{\mathbf{R}}_x$ and retain the $K$ dominant eigenvectors, denoted by $\hat{\mathbf{U}}_s\in\mathbb{C}^{M\times K}$. 
In the refit, the covariance matrix is restricted to the estimated signal subspace and parameterized as $\mathbf{R}_{x}^{\mathrm{sub}}=\hat{\mathbf{U}}_s\mathbf{Z}\hat{\mathbf{U}}_s^H$ with $\mathbf{Z}\succeq 0$.
We then remove the nuclear-norm penalty and re-optimize $\mathbf{Z}$ and $\mathbf{q}$ using a pure $\ell_2$ data-fitting criterion, subject to the same structural constraints on $\mathbf{q}$. 
Finally, MUSIC is applied to the refitted covariance estimate
\[
\hat{\mathbf{R}}_x^{\mathrm{refit}}
=
\hat{\mathbf{U}}_s\,\hat{\mathbf{Z}}\,\hat{\mathbf{U}}_s^H
\]
to extract the DoAs. 
The nuclear-norm-based joint estimation is summarized in Algorithm~4.

\begin{algorithm}[htb]
\caption{Multiple-Source Calibration via Joint Nuclear-Norm Regularization}
\begin{algorithmic}[1]
\STATE \textbf{Input}: Sample covariance $\hat{\mathbf{R}}_y$, Toeplitz basis matrices $\{\mathbf{T}_m\}_{m=0}^{M-1}$, regularization parameter $\epsilon$.
\STATE Parameterize the inverse distortion covariance as
\[
\mathbf{Q}_b(0 ^\circ;\mathbf{q})
=
\sum_{m=0}^{M-1} q_m \mathbf{T}_m .
\]
\STATE Jointly estimate $(\mathbf{R}_x,\mathbf{q})$ by solving
\[
\begin{aligned}
(\hat{\mathbf{R}}_x,\hat{\mathbf{q}})
&=
\operatorname*{arg\,min}_{\mathbf{R}_x \succeq 0,\;\mathbf{q}\ge 0}
\;
\left\|
\hat{\mathbf{R}}_y \odot \mathbf{Q}_b(0 ^\circ;\mathbf{q})
-
\mathbf{R}_x
\right\|_F^2
\\
&\quad
+
\epsilon \|\mathbf{R}_x\|_* .
\end{aligned}
\]
subject to
\[
q_0=1,
\qquad
q_0 \le q_1 \le \cdots \le q_{M-1},
\]
\text{and  $\mathbf{R}_x =  \sum_{m=0}^{2M-2} \eta_m \boldsymbol{\Sigma}_m,\; \eta_m\in\mathbb{R}.$}
\STATE \textbf{Debiasing refit (optional):} Compute the $K$ dominant eigenvectors of $\hat{\mathbf{R}}_x$ to form $\hat{\mathbf{U}}_s$, restrict $\mathbf{R}_x=\hat{\mathbf{U}}_s\mathbf{Z}\hat{\mathbf{U}}_s^{H}$ with $\mathbf{Z}\succeq 0$, and re-solve the problem with the nuclear-norm term removed to obtain $(\hat{\mathbf{Z}},\hat{\mathbf{q}})$ under the same constraints on $\mathbf{q}$.
\STATE Apply MUSIC to the (refitted) covariance to obtain the DoA estimates.
\STATE \textbf{Output}: $\hat{\mathbf{R}}_x$ (or $\hat{\mathbf{R}}_x^{\mathrm{refit}}$), $\hat{\mathbf{q}}$, and the estimated DoAs.
\end{algorithmic}
\label{alg:nuclear_multi}
\end{algorithm}

%% file: performance_bounds.tex
\section{Performance Bounds and Identifiability}
\label{sec:performance_bounds}
This section gives single-source performance bounds for the two covariance models considered in this paper. The first bound corresponds to the relaxed structured-covariance model used by the single-source calibration method in Sec.~\ref{sec:calibration}, and is referred to as the structured-covariance CRLB (SC--CRLB). The second bound uses the physics-informed distortion covariance in~\eqref{eq:Rb_piecewise}, together with the empirical correlation law in~\eqref{eq:empirical_alpha} and the angle-dependent separation in~\eqref{eq:d_theta}; it is referred to as the physics-informed CRLB (PI--CRLB).

The bounds are restricted to the single-source case, since in  the original multi-source model~\eqref{eq:covariance_measurement}, assigning an independent angle-dependent distortion covariance $\mathbf{R}_b(\theta_n)$ to each source leads to the non-identifiability discussed in Prop.~\ref{prop:multi_identifiability}.

We use the standard Slepian--Bangs formula for zero-mean circular complex Gaussian observations~\cite[App.~B.3]{stoica2005spectral}. For a real parameter vector $\mathbf{g}$ and covariance $\mathbf{R}_y(\mathbf{g})$, define
\begin{equation}
\mathbf{s}_{g_i}
=\mathrm{vec}\!\left(\frac{\partial\mathbf{R}_y}{\partial g_i}\right),
\qquad
\mathbf{Q}=\mathbf{R}_y^{-T}\otimes\mathbf{R}_y^{-1}.
\label{eq:sensitivity_and_Q_def}
\end{equation}
The FIM entries are
\begin{equation}
[\mathbf{F}(\mathbf{g})]_{ij}
=
T\,\mathrm{Re}\!\left\{
\mathbf{s}_{g_i}^{H}\mathbf{Q}\mathbf{s}_{g_j}
\right\}.
\label{eq:FIM_SB}
\end{equation}
For a partition $\mathbf{g}=[\boldsymbol{\psi}^{T},\boldsymbol{\nu}^{T}]^{T}$ into parameters of interest $\boldsymbol{\psi}$ and nuisance parameters $\boldsymbol{\nu}$, the nuisance-eliminated information matrix is obtained by the Schur complement,
\begin{equation}
\mathbf{F}_{\mathrm{eff}}
=
\mathbf{F}_{\psi\psi}
-
\mathbf{F}_{\psi\nu}
\mathbf{F}_{\nu\nu}^{-1}
\mathbf{F}_{\nu\psi},
\label{eq:Feff_schur}
\end{equation}
and the CRLB is $\mathbf{F}_{\mathrm{eff}}^{-1}$ when $\mathbf{F}_{\mathrm{eff}}$ is nonsingular.

\subsection{Structured-Covariance CRLB}
\label{subsec:SC_CRLB}
The SC--CRLB is based on the relaxed single-source covariance model
\begin{equation}
\mathbf{R}_y^{\mathrm{SC}}
=
\mathbf{S}(\theta)\odot\mathbf{R}_{\mathrm{SC}}(\mathbf{c}),
\qquad
\mathbf{R}_{\mathrm{SC}}(\mathbf{c})
=
\sum_{k=0}^{M-1}c_k\mathbf{T}_k,
\label{eq:Ry_SC_model}
\end{equation}
where $\mathbf{S}(\theta)=\mathbf{a}(\theta)\mathbf{a}^{H}(\theta)$ is defined in~\eqref{eq:S_def}, and $\mathbf{R}_{\mathrm{SC}}(\mathbf{c})$ is a real symmetric Toeplitz matrix. The parameter of interest is $\theta$, while $\mathbf{c}=[c_0,\ldots,c_{M-1}]^T$ is treated as nuisance.

\begin{lemma}
\label{lemma1}
For the SC model in~\eqref{eq:Ry_SC_model}, let the FIM of $[\theta,\mathbf{c}^{T}]^{T}$ be computed from~\eqref{eq:FIM_SB} and partitioned as
\begin{equation}
\mathbf{F}_{\mathrm{SC}}
=
\begin{bmatrix}
F_{\theta\theta} & \mathbf{F}_{\theta c}\\
\mathbf{F}_{c\theta} & \mathbf{F}_{cc}
\end{bmatrix}.
\end{equation}
Then the SC--CRLB for $\theta$ is
\begin{equation}
\mathrm{CRLB}_{\mathrm{SC}}(\theta)
=
\left(
F_{\theta\theta}
-
\mathbf{F}_{\theta c}
\mathbf{F}_{cc}^{-1}
\mathbf{F}_{c\theta}
\right)^{-1}.
\label{eq:CRLB_theta_SC}
\end{equation}
\end{lemma}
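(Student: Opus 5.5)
The proof applies the Slepian--Bangs formula~\eqref{eq:FIM_SB} to the model~\eqref{eq:Ry_SC_model} with parameter vector $\mathbf{g}=[\theta,\mathbf{c}^T]^T$, then eliminates the nuisance block through the Schur complement~\eqref{eq:Feff_schur}. The work splits into four steps: computing the sensitivity vectors, forming the FIM blocks, checking that the relevant blocks are invertible, and applying the standard partitioned-inverse argument.

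\textbf{Sensitivity vectors.} Since $\mathbf{R}_{\mathrm{SC}}(\mathbf{c})$ is linear in $\mathbf{c}$, we have $\partial\mathbf{R}_y^{\mathrm{SC}}/\partial c_k=\mathbf{S}(\theta)\odot\mathbf{T}_k$. This gives $\mathbf{s}_{c_k}=\mathrm{vec}(\mathbf{S}(\theta)\odot\mathbf{T}_k)$. For the angle, write $\mathbf{D}=\mathrm{diag}(0,1,\dots,M-1)$ and $\dot{\mathbf{a}}=j2\pi d_0\cos\theta\,\mathbf{D}\mathbf{a}$. Then $\partial\mathbf{S}/\partial\theta=\dot{\mathbf{a}}\mathbf{a}^H+\mathbf{a}\dot{\mathbf{a}}^H$. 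Entrywise this is $j2\pi d_0\cos\theta\,(m-\ell)[\mathbf{S}]_{m\ell}$, so $\mathbf{s}_\theta=\mathrm{vec}\big((\partial\mathbf{S}/\partial\theta)\odot\mathbf{R}_{\mathrm{SC}}(\mathbf{c})\big)$.

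\textbf{FIM blocks.} Substituting these vectors into~\eqref{eq:FIM_SB} yields $F_{\theta\theta}$, $\mathbf{F}_{\theta c}=\mathbf{F}_{c\theta}^T$ and $\mathbf{F}_{cc}$ explicitly. Each entry has the form $T\,\mathrm{Re}\{\mathrm{tr}(\mathbf{R}_y^{-1}\partial_i\mathbf{R}_y\mathbf{R}_y^{-1}\partial_j\mathbf{R}_y)\}$, using the identity $\mathrm{vec}(\mathbf{A})^H(\mathbf{R}^{-T}\otimes\mathbf{R}^{-1})\mathrm{vec}(\mathbf{B})=\mathrm{tr}(\mathbf{A}^H\mathbf{R}^{-1}\mathbf{B}\mathbf{R}^{-1})$. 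Here $\mathbf{R}_y$ must be positive definite for $\mathbf{Q}$ to exist. In the noise-free relaxed model this requires $\mathbf{R}_{\mathrm{SC}}(\mathbf{c})\succ\mathbf{0}$, because $\mathbf{S}\odot\mathbf{R}_{\mathrm{SC}}=\mathrm{diag}(\mathbf{a})\mathbf{R}_{\mathrm{SC}}\mathrm{diag}(\mathbf{a})^H$ is unitarily similar to $\mathbf{R}_{\mathrm{SC}}$. I would state this as a standing assumption, noting that a noise term $\sigma_n^2\mathbf{I}$ could be added without changing the argument.

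\textbf{Invertibility of $\mathbf{F}_{cc}$ (main obstacle).} Because $\mathbf{Q}\succ\mathbf{0}$, $\mathbf{F}_{cc}$ is nonsingular if and only if the vectors $\{\mathbf{s}_{c_k}\}_{k=0}^{M-1}$ are linearly independent over $\mathbb{R}$. Some care is needed here, since $\mathrm{Re}\{\mathbf{x}^H\mathbf{Q}\mathbf{x}\}=0$ must force the real combination to vanish, and it does because $\mathbf{Q}$ is Hermitian positive definite. The matrices $\mathbf{S}\odot\mathbf{T}_k$ have disjoint supports on the diagonals $\pm k$, and $\mathbf{S}$ has unit-modulus entries, so any vanishing real combination forces every $c_k=0$. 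Hence $\mathbf{F}_{cc}\succ\mathbf{0}$. This is exactly where the relaxation to real Toeplitz matters.

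\textbf{Schur complement.} With $\mathbf{F}_{cc}$ invertible, I would invoke the block-inverse formula. The $(1,1)$ entry of $\mathbf{F}_{\mathrm{SC}}^{-1}$ equals $(F_{\theta\theta}-\mathbf{F}_{\theta c}\mathbf{F}_{cc}^{-1}\mathbf{F}_{c\theta})^{-1}$ whenever this scalar is positive. By the standard CRLB for a subvector of the parameters, this gives~\eqref{eq:CRLB_theta_SC}.

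Positivity of the Schur complement requires $\mathbf{s}_\theta$ not to lie in the real span of $\{\mathbf{s}_{c_k}\}$. This holds when $\cos\theta\neq0$ and $\mathbf{c}\neq\mathbf{0}$: on the $k$-th off-diagonal, the entries of $\mathbf{s}_\theta$ equal those of $\mathbf{s}_{c_k}$ multiplied by $j2\pi d_0\cos\theta\,(\pm k)\,c_k$. That factor is purely imaginary, so it cannot be matched by the real coefficients $c_k$. This identifiability remark shows the bound is finite; the endfire case $\cos\theta=0$ should be excluded.
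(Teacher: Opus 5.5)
Your proof takes the same route as the paper's own argument in Appendix~\ref{App.A}. You use the same sensitivities, $\frac{\partial\mathbf{S}}{\partial\theta}\odot\mathbf{R}_{\mathrm{SC}}(\mathbf{c})$ and $\mathbf{S}(\theta)\odot\mathbf{T}_k$, substitute them into the Slepian--Bangs formula, and take the Schur complement over $\mathbf{c}$. The core is correct, and your check that $\mathbf{F}_{cc}\succ\mathbf{0}$ goes beyond what the paper states.

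Three side remarks need correction.

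\textbf{Finiteness condition.} The bound is finite when $c_k\neq 0$ for some $k\ge 1$, not when $\mathbf{c}\neq\mathbf{0}$. The condition $\mathbf{c}\neq\mathbf{0}$ already holds once $\mathbf{R}_{\mathrm{SC}}\succ\mathbf{0}$, so it excludes nothing. For example, if $\mathbf{R}_{\mathrm{SC}}=c_0\mathbf{I}$, then $\mathbf{s}_\theta=\mathbf{0}$, because $\partial\mathbf{S}/\partial\theta$ has a zero diagonal, and the bound is infinite. Your own diagonal-by-diagonal computation gives the correct condition.

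\textbf{Where real-valuedness is used.} The real Toeplitz relaxation matters in the Schur-complement step, not in the invertibility of $\mathbf{F}_{cc}$. A complex Hermitian Toeplitz relaxation would still give $\mathbf{F}_{cc}\succ\mathbf{0}$. However, it would place $\mathbf{s}_\theta$ inside the real span of the nuisance sensitivities, so $\theta$ would be unidentifiable.

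\textbf{Noise term.} An unknown noise variance cannot be added as a separate parameter. Its sensitivity is $\mathrm{vec}(\mathbf{I})=\mathrm{vec}(\mathbf{S}\odot\mathbf{T}_0)$, which duplicates that of $c_0$ and would make the nuisance FIM singular. Adding noise is harmless only if its variance is known or absorbed into $c_0$.
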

The covariance sensitivities needed for Lemma~\ref{lemma1} are listed in Appendix~\ref{App.A}. 

\subsection{Physics-Informed CRLB}
\label{subsec:PI_CRLB}
For the PI--CRLB, the distortion covariance is constrained by the physical model in Sec.~\ref{sec:2}. In order to obtain an identifiable parameterization, we define
\begin{equation}
\kappa
\triangleq
\frac{a_1R}{a_2R+1}.
\label{eq:kappa_def}
\end{equation}
With this reparameterization, write the physics-informed distortion covariance as $\mathbf{R}_b(\theta,\kappa,a_3)$, obtained from~\eqref{eq:Rb_piecewise} by evaluating~\eqref{eq:empirical_alpha} with the separation in~\eqref{eq:d_theta}. The corresponding single-source covariance is
\begin{equation}
\mathbf{R}_y^{\mathrm{PI}}
=
\rho\,
\big[
\mathbf{S}(\theta)\odot\mathbf{R}_b(\theta,\kappa,a_3)
\big]
+
\sigma_n^2\mathbf{I}_M,
\label{eq:Ry_PI_model}
\end{equation}
where $\rho\triangleq |\beta|^2\sigma_s^2$ is the effective received source power. The parameters of interest and nuisance parameters are
\begin{equation}
\boldsymbol{\psi}_{\mathrm{PI}}
=
[\theta,\kappa,a_3]^T,
\qquad
\boldsymbol{\nu}_{\mathrm{PI}}
=
[\rho,\sigma_n^2]^T.
\label{eq:param_PI_interest_nuis}
\end{equation}

\begin{lemma}
\label{lemma2}
For the PI model in~\eqref{eq:Ry_PI_model}, let the FIM of $[\boldsymbol{\psi}_{\mathrm{PI}}^T,\boldsymbol{\nu}_{\mathrm{PI}}^T]^T$ be computed from~\eqref{eq:FIM_SB}. Partition it according to~\eqref{eq:Feff_schur}. Then the PI--CRLB for $[\theta,\kappa,a_3]^T$ is
\begin{equation}
\mathbf{C}_{\mathrm{PI}}
=
\left(
\mathbf{F}_{\psi\psi}
-
\mathbf{F}_{\psi\nu}
\mathbf{F}_{\nu\nu}^{-1}
\mathbf{F}_{\nu\psi}
\right)^{-1}.
\label{eq:CRLB_PI_matrix}
\end{equation}
The marginal bounds are
\begin{align}
\mathrm{CRLB}_{\mathrm{PI}}(\theta)&=[\mathbf{C}_{\mathrm{PI}}]_{1,1},
\label{eq:CRLB_PI_theta}\\
\mathrm{CRLB}_{\mathrm{PI}}(\kappa)&=[\mathbf{C}_{\mathrm{PI}}]_{2,2},
\label{eq:CRLB_PI_kappa}\\
\mathrm{CRLB}_{\mathrm{PI}}(a_3)&=[\mathbf{C}_{\mathrm{PI}}]_{3,3}.
\label{eq:CRLB_PI_a3}
\end{align}
\end{lemma}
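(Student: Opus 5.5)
The plan is to derive Lemma~\ref{lemma2} directly from the Slepian--Bangs formula~\eqref{eq:FIM_SB} followed by the Schur-complement elimination~\eqref{eq:Feff_schur}. The statement is essentially a specialization of these standard results, so the proof reduces to three checks: the PI model~\eqref{eq:Ry_PI_model} meets the hypotheses of the Slepian--Bangs formula; the full FIM has the stated block structure; and the resulting Schur complement is the inverse of the relevant block of the full CRLB.

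\emph{Step 1: the observation model fits the formula.} Under the model of Sec.~\ref{sec:meas_model}, the snapshots $\mathbf{y}(t)$ are i.i.d. across $t$. I will treat them as zero-mean circular complex Gaussian with covariance $\mathbf{R}_y^{\mathrm{PI}}(\mathbf{g})$, where $\mathbf{g}=[\theta,\kappa,a_3,\rho,\sigma_n^2]^T$. This is the same Gaussian working assumption the paper adopts for all bounds. The covariance is a smooth function of $\mathbf{g}$:
\begin{itemize}
\item $\mathbf{S}(\theta)$ is smooth in $\theta$;
\item each entry of $\mathbf{R}_b$ is $1+2\lambda_{11}\exp\!\big(-\kappa\, d/(a_3d+1)\big)$ with $d=|m-\ell|d_0\cos\theta$, which is smooth wherever $a_3d+1>0$.
\end{itemize}
For $\sigma_n^2>0$ the covariance $\mathbf{R}_y^{\mathrm{PI}}$ is positive definite, so $\mathbf{Q}$ is well defined. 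The FIM of the $T$ snapshots is then given entrywise by~\eqref{eq:FIM_SB}.

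\emph{Step 2: the sensitivity vectors.} I will write out $\mathbf{s}_{g_i}$ for each parameter:
\begin{itemize}
\item $\partial\mathbf{R}_y/\partial\rho=\mathbf{S}\odot\mathbf{R}_b$ and $\partial\mathbf{R}_y/\partial\sigma_n^2=\mathbf{I}_M$;
\item the derivatives with respect to $\kappa$ and $a_3$ are $\rho\,\mathbf{S}\odot\partial\mathbf{R}_b$, obtained from the chain rule on the off-diagonal exponential;
\item $\partial\mathbf{R}_y/\partial\theta=\rho\big[(\partial\mathbf{S}/\partial\theta)\odot\mathbf{R}_b+\mathbf{S}\odot\partial\mathbf{R}_b/\partial\theta\big]$, where the second term comes from the angle-dependent separation~\eqref{eq:d_theta}.
\end{itemize}
This is the bookkeeping part of the proof and would be deferred to an appendix, as was done for Lemma~\ref{lemma1}.

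\emph{Step 3: block inversion.} I will order the parameters as $[\boldsymbol{\psi}_{\mathrm{PI}}^T,\boldsymbol{\nu}_{\mathrm{PI}}^T]^T$ and apply the block-matrix inversion formula. When $\mathbf{F}_{\nu\nu}$ and the full FIM are nonsingular, the $\psi\psi$ block of $\mathbf{F}^{-1}$ equals $(\mathbf{F}_{\psi\psi}-\mathbf{F}_{\psi\nu}\mathbf{F}_{\nu\nu}^{-1}\mathbf{F}_{\nu\psi})^{-1}$. The CRLB for $\boldsymbol{\psi}_{\mathrm{PI}}$ in the presence of unknown nuisance parameters is exactly this block, which gives~\eqref{eq:CRLB_PI_matrix}. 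Its diagonal entries are the marginal bounds~\eqref{eq:CRLB_PI_theta}--\eqref{eq:CRLB_PI_a3}, because the CRLB matrix inequality implies the corresponding inequality for each diagonal element.

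\emph{Main obstacle: nonsingularity of the FIM.} This amounts to local identifiability of $(\theta,\kappa,a_3,\rho,\sigma_n^2)$, and the reason for introducing $\kappa$ in~\eqref{eq:kappa_def} is that $a_1,a_2,R$ enter only through this combination. To establish it, I would argue that the five matrices $\partial\mathbf{R}_y/\partial g_i$ are linearly independent over $\mathbb{R}$:
\begin{itemize}
\item $\mathbf{I}_M$ is the only sensitivity that is nonzero on the diagonal alone;
\item $\partial\mathbf{R}_y/\partial\kappa$ and $\partial\mathbf{R}_y/\partial a_3$ vanish on the diagonal, and have distinct lag profiles $-d/(a_3d+1)$ and $\kappa d^2/(a_3d+1)^2$ for $M\ge 3$;
\item the $\theta$-sensitivity carries the factor $j2\pi(m-\ell)d_0\cos\theta$ in its phase, which is absent from the others.
\end{itemize}
Since $\mathbf{Q}\succ 0$, linear independence of the sensitivities yields $\mathbf{F}\succ 0$. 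The argument breaks down at degenerate points such as $\theta=\pm90^\circ$, where $d=0$; the statement should therefore be understood to hold whenever $\mathbf{F}_{\mathrm{eff}}$ is nonsingular.
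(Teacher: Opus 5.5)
Your core argument matches the paper's proof. Appendix~\ref{App.B} lists these same sensitivities, including the $\mathbf{S}\odot\partial\mathbf{R}_b/\partial\theta$ term from the angle-dependent separation. It substitutes them into~\eqref{eq:FIM_SB} and takes the Schur complement over $[\rho,\sigma_n^2]^T$, assuming rather than proving that $\mathbf{F}_{\mathrm{eff}}$ is nonsingular. Your block-inversion step and your closing caveat therefore already establish the lemma as stated.

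The part where you go beyond the paper, the nonsingularity sketch, is incorrect as written. Your bullets set $\theta$ apart and distinguish $\kappa$ from $a_3$, but they never separate the $\rho$-sensitivity $\mathbf{S}\odot\mathbf{R}_b$ from the others. Noting that $\mathbf{I}_M$ lives only on the diagonal does not help, because $\mathbf{S}\odot\mathbf{R}_b$ also has diagonal $1+2\lambda_{11}$.

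To see what is actually needed, write a vanishing real combination $\sum_i c_i\,\partial\mathbf{R}_y/\partial g_i=\mathbf{0}$ and divide entrywise by $\mathbf{S}$. This is allowed because $|[\mathbf{S}]_{m\ell}|=1$ and $\mathbf{I}_M=\mathbf{S}\odot\mathbf{I}_M$. The imaginary part does force $c_\theta=0$. The real part, however, is a real symmetric Toeplitz identity, so it gives one scalar equation per lag. At lag $0$ it is $(1+2\lambda_{11})c_\rho+c_\sigma=0$, and for the nonzero lags it is
\[
2\lambda_{11}\rho\,\alpha_k\bigl(\kappa\zeta_k^2c_{a_3}-\zeta_kc_\kappa\bigr)+(1+2\lambda_{11}\alpha_k)\,c_\rho=0,\qquad k=1,\dots,M-1,
\]
using $\partial\alpha_k/\partial a_3=\kappa\zeta_k^2\alpha_k$.

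These are $M$ homogeneous equations in the four unknowns $(c_\kappa,c_{a_3},c_\rho,c_\sigma)$. For $M=3$ a nonzero solution therefore always exists, so the FIM is singular at every parameter point. Since $\mathbf{F}_{\nu\nu}$ is nonsingular, $\mathbf{F}_{\mathrm{eff}}$ is singular too. Your ``$M\ge 3$'' only separates $\kappa$ from $a_3$.

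Linear independence needs at least three nonzero lags, i.e.\ $M\ge 4$. It also needs the $(M-1)\times 3$ matrix with rows $[\zeta_k\alpha_k,\ \zeta_k^2\alpha_k,\ 1+2\lambda_{11}\alpha_k]$ to have rank $3$, together with $\kappa,\rho,\lambda_{11}\neq 0$ and $\cos\theta\neq 0$. This holds generically but not automatically. Either state this condition explicitly or keep nonsingularity as an assumption, as the paper does.
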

The covariance sensitivities needed for Lemma~\ref{lemma2} are listed in Appendix~\ref{App.B}. 

Figure~\ref{crlb_vs_rmse} compares the SC--CRLB and PI--CRLB for $\theta$ with the RMSE of the proposed single-source estimators, while Fig.~\ref{fig:crlb_beta_a3} reports the corresponding bounds for $\kappa$ and $a_3$. These results indicate that the empirical parameters are locally identifiable under the considered single-source operating conditions. However, converting $(\kappa,a_3)$ into physical quantities such as rain rate and propagation range still requires a calibrated mapping, for example through the tabulated coefficients in~\cite[Table~II]{yektakhah2024model}.

%% file: numerical_results.tex
\section{Numerical Results and Performance Analysis}
\label{Numerical Results}

In this section, we present simulation results to evaluate the performance of the proposed methods and compare them with the corresponding Cramér--Rao lower bounds (CRLBs), which are available in the single-source case.

\subsection{Single-Source Case}
We first consider the single-source scenario to validate the proposed calibration method. A ULA with $M = 8$ elements and half-wavelength spacing is employed, observing a single source located at $\theta = 30^\circ$ with a total of $T = 1000$ snapshots. For the rain-induced distortion model, a rain rate of $50\,\mathrm{mm/hr}$ and a propagation range of $200\,\mathrm{m}$ are assumed.

\begin{figure}[t]
    \centering
    \includegraphics[width=\columnwidth]{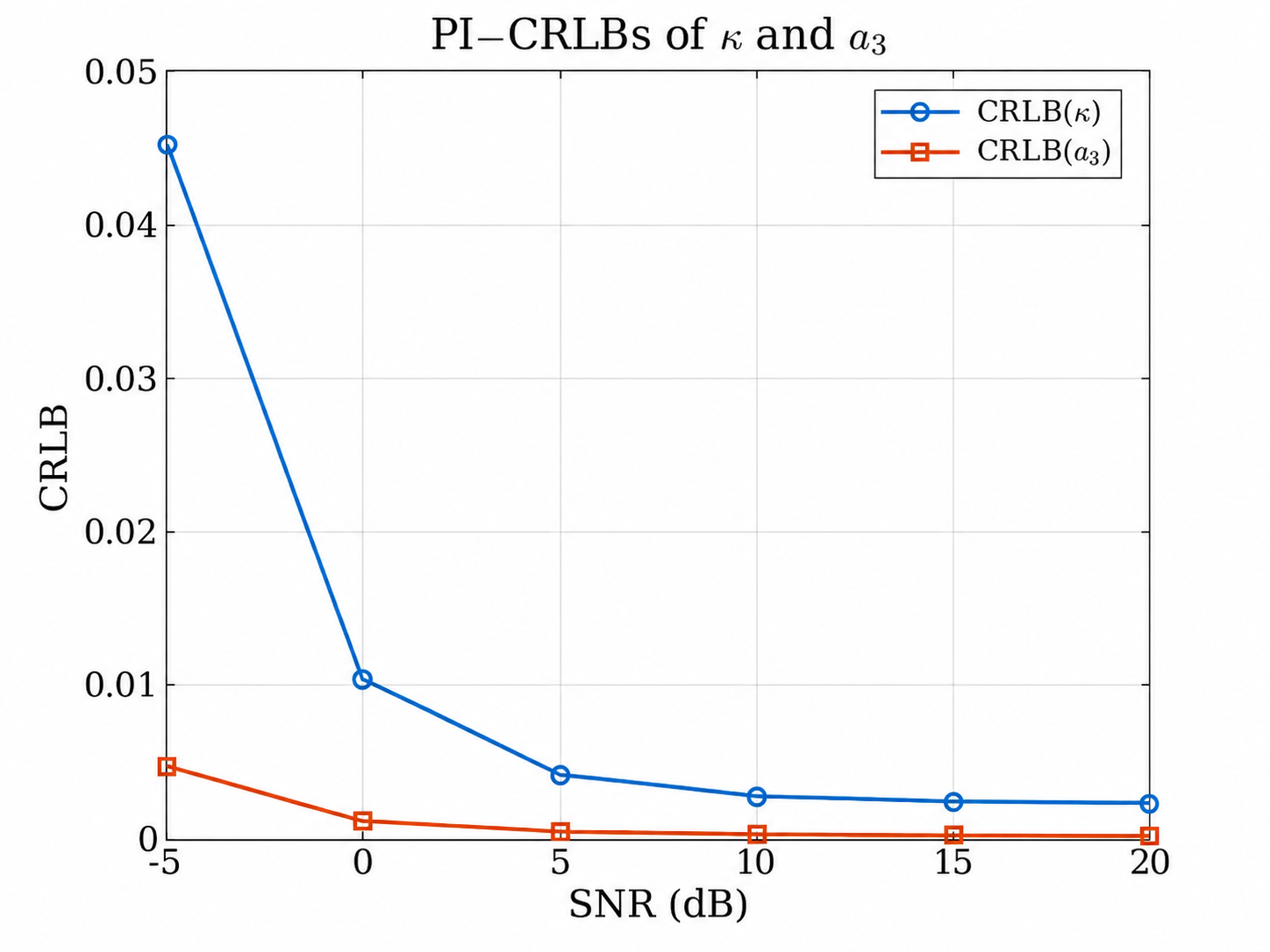}
    \caption{PI--CRLBs of the empirical parameters $\kappa$ and $a_3$ versus SNR in the single-source physics-informed model.}
    \label{fig:crlb_beta_a3}
\end{figure}

Figure~\ref{distortioncov} compares the subdiagonal values of the estimated distortion covariance matrix with the true distortion covariance matrix at $\mathrm{SNR}=20\ \mathrm{dB}$. The strong agreement across all subdiagonals confirms that the proposed estimation procedure accurately captures the distortion statistics.

\begin{figure}[htbp]
    \centering
    \includegraphics[width=\columnwidth]{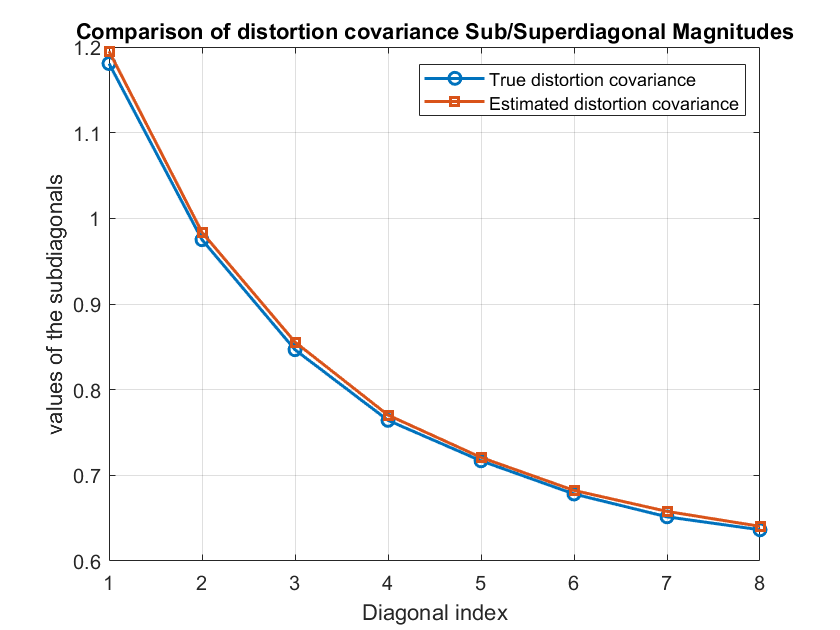}
    \caption{Subdiagonal-wise comparison between the estimated and true distortion covariance matrices at $\mathrm{SNR}=20\ \mathrm{dB}$ and for $T = 1000$ snapshots.}
    \label{distortioncov}
\end{figure}

Figure~\ref{bartlett} compares the Bartlett (conventional beamforming) spatial spectra for three cases: an undistorted array (clean reference), a rain-distorted array without calibration, and a rain-distorted array after calibration using either LS or WLS. 
We adopt Bartlett beamforming rather than MUSIC in this comparison because, in the single-source undistorted case, Bartlett is equivalent to the maximum-likelihood (ML) estimator under spatially white Gaussian noise.

\begin{figure}[htbp]
    \centering
    \includegraphics[width=\columnwidth]{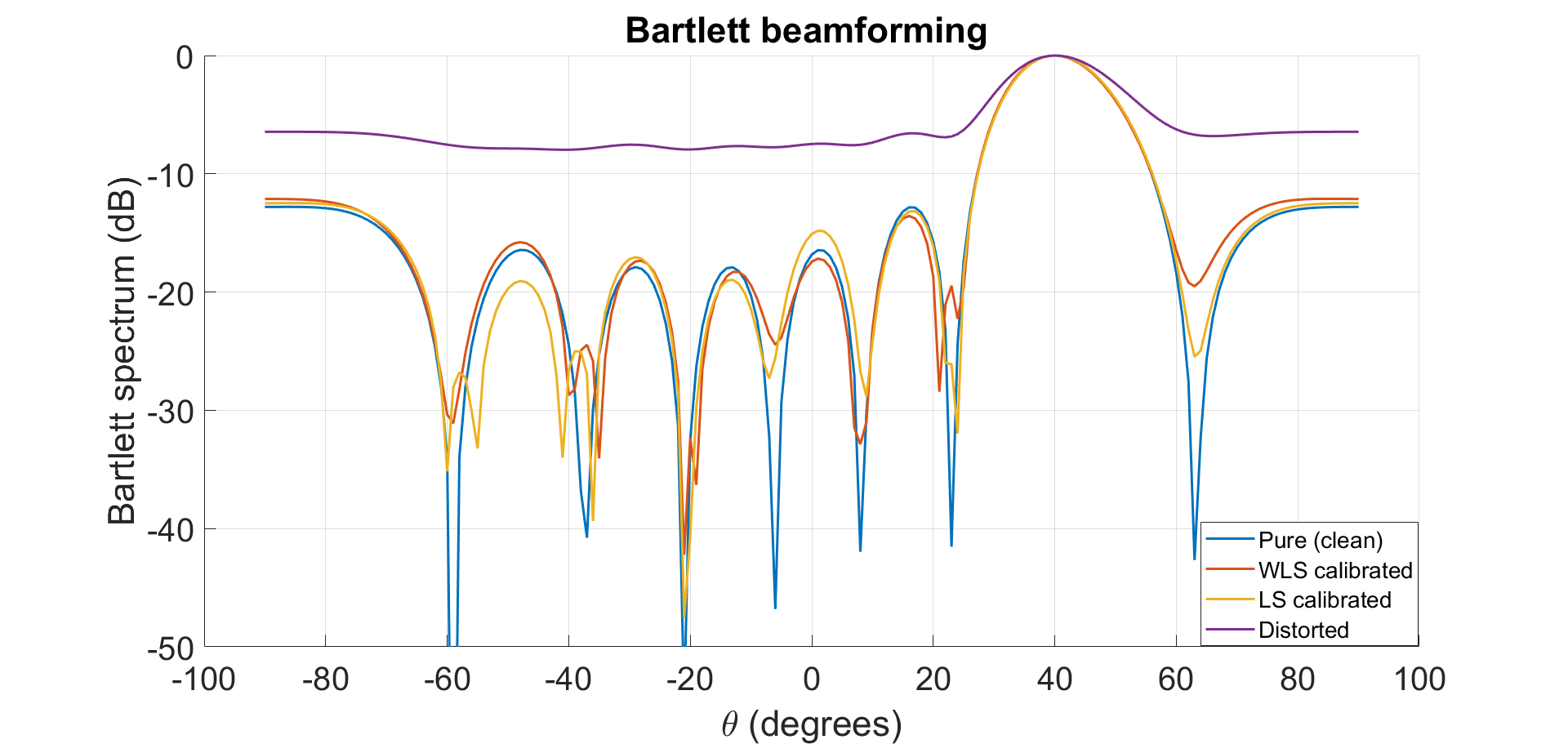}
    \caption{Bartlett spatial spectra under three conditions: clean reference, rain distortion without calibration, and rain distortion with LS/WLS calibration. }
    \label{bartlett}
\end{figure}

Without calibration, rain-induced distortion raises the spectral floor and fills in the nulls, resulting in a broadened mainlobe and reduced peak-to-sidelobe contrast, which in turn degrades angular resolution. 
Both the LS and WLS based calibration procedures substantially restore the clean beampattern, producing a pronounced peak at the true direction and improved sidelobe suppression. 
Moreover, WLS more closely matches the clean reference, yielding a slightly sharper peak and deeper nulls than LS, which highlights the benefit of statistically informed weighting.

Figure~\ref{rmse_single} reports the root mean squared error (RMSE) of the DoA estimates versus SNR. Both the proposed calibration-enhanced Bartlett beamformer and the conventional Bartlett baseline are evaluated. The results clearly show that calibration substantially improves estimation accuracy over the uncalibrated case, and that the WLS-based variant consistently outperforms its LS counterpart.

Furthermore, Fig.~\ref{crlb_vs_rmse} compares the RMSE of the proposed methods with both the model-based and the physics-based CRLBs. The results show that the proposed method closely approaches the theoretical limits at sufficiently high SNR, with the WLS-based variant lying even closer to the corresponding CRLB than its LS counterpart. 
As expected, the physics-based CRLB provides the lowest variance bound, while the model-based CRLB remains very close to it.

\begin{figure}[htbp]
    \centering    \includegraphics[width=\columnwidth]{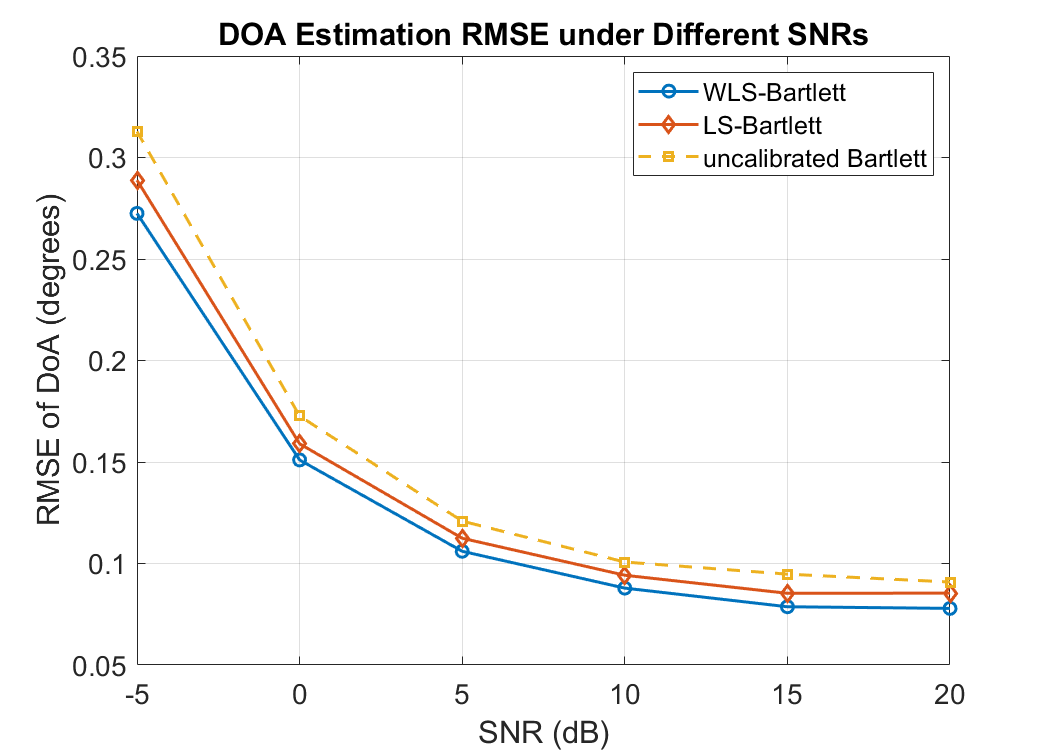}
    \caption{RMSE comparison between the proposed calibration-enhanced Bartlett and Bartlett without calibration, with $T =1000$ snapshots.}
    \label{rmse_single}
\end{figure}

\begin{figure}[htbp]
    \centering
    \includegraphics[width=\columnwidth]{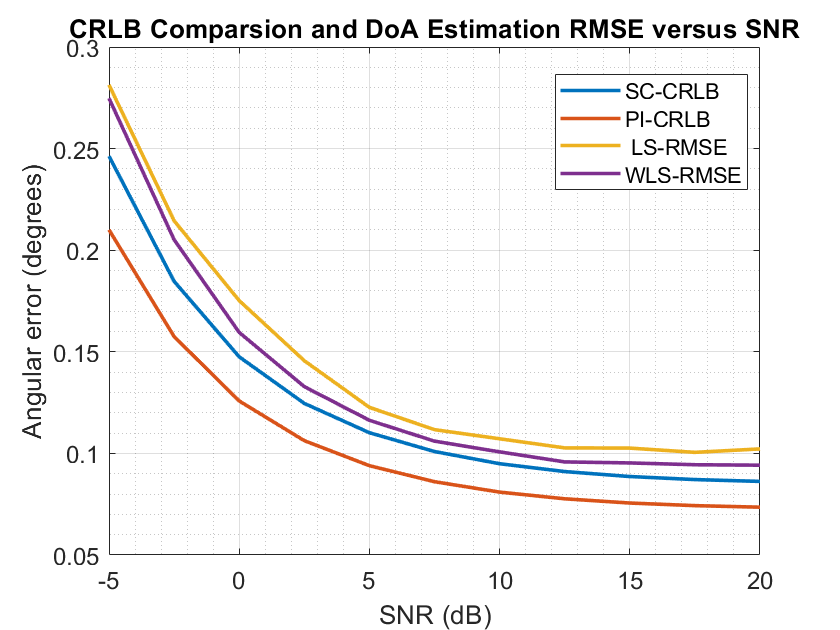}
    \caption{Comparison of RMSE with the model-based CRLB and the physics-based CRLB in the single-source case.}
    \label{crlb_vs_rmse}
\end{figure}

\subsection{Multiple-Source Case}
We next evaluate the proposed multi-source calibration methods using a representative two-source experiment, where the sources are located at ($10^\circ$) and ($20^\circ$). The second source is set to be $10\,\mathrm{dB}$ weaker than the first. 
All other parameters are kept identical to those in the single-source case.

The resulting spectra for $T=1000$ snapshots are shown in Fig.~\ref{fig:spectrum_multiple}. 
The top panel compares the MUSIC spectrum computed from the rain-distorted sample covariance with that obtained after calibration using the proposed joint nuclear-norm formulation. 
Conventional MUSIC applied directly to the distorted covariance fails to resolve the two sources, whereas the calibrated covariance yields two distinct peaks at the correct angles.

The bottom panel compares the alternating LASSO and joint LASSO approaches. 
Both methods localize the DoAs accurately, but their reconstructed source powers differ. 
The alternating LASSO estimates the source powers as $1.03$ and $0.10$, corresponding to a linear ratio of approximately $10.3$ (about $10.1\,\mathrm{dB}$), which closely matches the prescribed $10\,\mathrm{dB}$ difference. 
The joint LASSO yields $0.91$ and $0.11$, corresponding to a ratio of approximately $8.27$ (about $9.2\,\mathrm{dB}$). 
Thus, the alternating formulation provides a more accurate power reconstruction in this experiment, because the joint formulation is more sensitive to mismatch in the estimated distortion covariance due to its element-wise division operation.

\begin{figure}[t]
  \centering
  \includegraphics[width=\columnwidth]{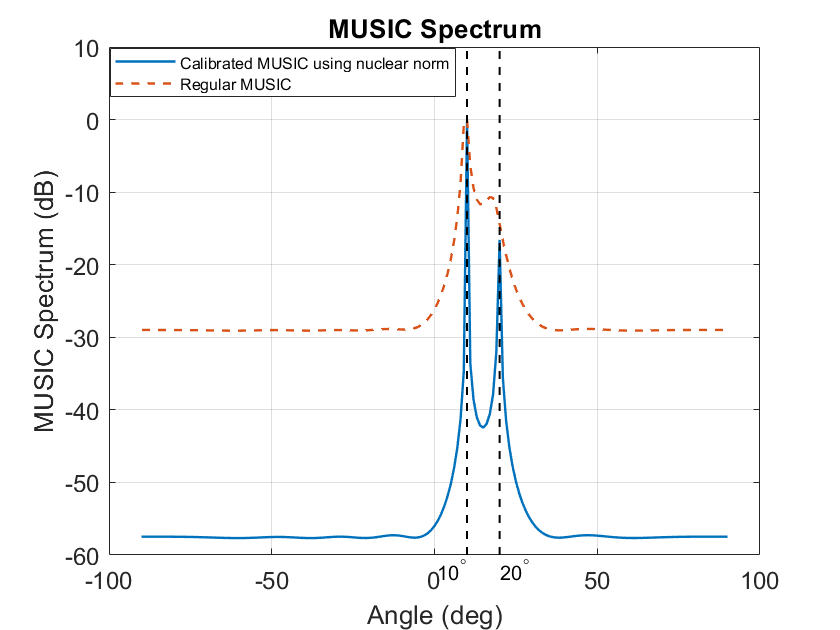}

  \vspace{0.35em}

  \includegraphics[width=\columnwidth]{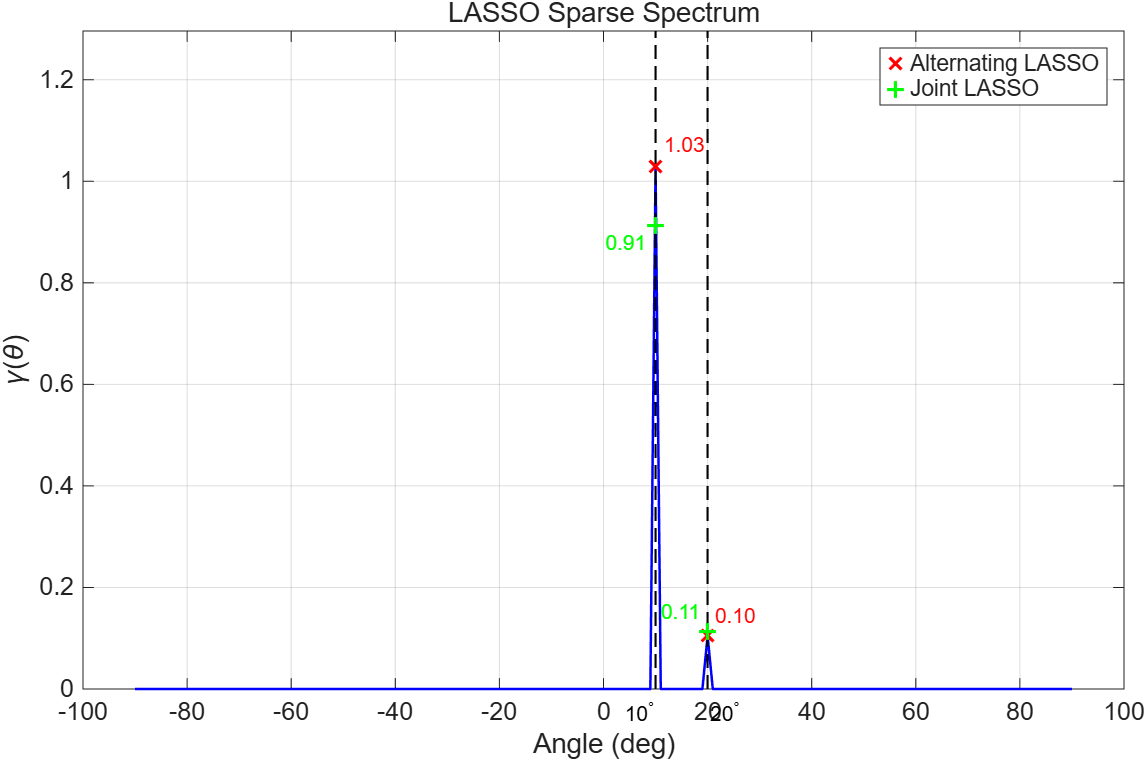}
  \caption{Two-source case under heavy rain ($\theta=\{10^\circ,20^\circ\}$, $T=1000$ snapshots) with a $10\,\mathrm{dB}$ power difference between the sources. 
  (Top) MUSIC spectrum computed from the rain-distorted sample covariance and after calibration using the proposed joint nuclear-norm formulation. 
  (Bottom) Recovered angular power spectra from the alternating LASSO and the joint LASSO formulations.}
  \label{fig:spectrum_multiple} 
\end{figure}

\begin{figure}[htbp]
    \centering
    \includegraphics[width=\columnwidth]{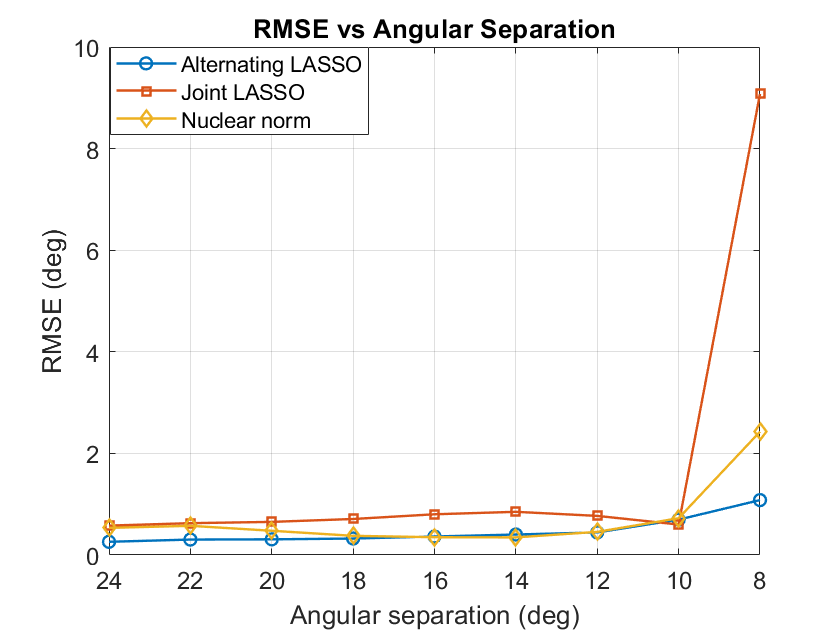}
    \caption{RMSE versus angular separation for two sources symmetrically placed around $10^\circ$, as the spacing decreases from $24^\circ$ to $8^\circ$, using $T=1000$ snapshots.}
    \label{fig:separation}
\end{figure}

To evaluate the capability of the proposed approaches in resolving two closely spaced sources, we computed the RMSE as the angular separation between two sources symmetrically placed around $10^\circ$ was reduced from $24^\circ$ to $8^\circ$ under heavy rain ($50~\mathrm{mm/hr}$). The results in Fig.~\ref{fig:separation} indicate that the alternating LASSO method consistently achieves the lowest RMSE over the entire range of separations and remains relatively stable even as the sources become closely spaced. For moderate and large separations, all three methods provide comparable accuracy, with RMSE values generally below $1^\circ$. However, as the separation decreases, clear differences emerge. In particular, at $8^\circ$, the joint LASSO method exhibits a dramatic performance degradation, with the RMSE increasing sharply to about $9^\circ$, while the nuclear-norm-based method also deteriorates noticeably to around $2.5^\circ$. By contrast, the alternating LASSO approach degrades much more mildly and still maintains an RMSE close to $1^\circ$. This behavior is consistent with the fact that the joint LASSO formulation relies on an element-wise division (or equivalently, Hadamard inversion) of the distortion covariance, which can amplify modeling mismatch and estimation noise.

%% file: conclusion.tex
\section{Conclusion}
This work studied DoA estimation under weather-induced phase and amplitude distortions. Based on a physically motivated rain-propagation model, we developed a structured covariance formulation that preserves the Toeplitz array structure while capturing distortion statistics. The proposed covariance-matching estimators enable calibration-enhanced DoA recovery by separating distortion effects from the signal covariance. We further showed that the general multi-source formulation is non-identifiable, introduced a reduced approximation for tractable estimation, and derived Cramér--Rao lower bounds from both structured-covariance and physics-informed perspectives.

The numerical results confirm the effectiveness of the proposed framework. In the single-source case, calibration substantially improves angular accuracy compared with Bartlett beamforming on distorted data. In multi-source scenarios, conventional subspace processing loses resolvability under rain-induced distortions, especially for closely spaced sources. The proposed structured-covariance methods recover distinct peaks and reduce estimation error, with the alternating LASSO method showing the most robust performance as angular separation decreases.

Future work should address the remaining non-identifiability in general multi-source formulations, extend the analysis beyond narrowband far-field ULAs, and quantify robustness to model mismatch, calibration errors, and limited snapshots. Validation with measured adverse-weather radar data and integration into radar--LiDAR--camera fusion pipelines are also important steps toward practical deployment.

%% file: appendix.tex
\appendices

\section{Covariance Sensitivities for Lemma~\ref{lemma1}}
\label{App.A}
The general CRLB calculation follows directly from the Slepian--Bangs formula in~\eqref{eq:FIM_SB} and the Schur complement in~\eqref{eq:Feff_schur}. We only list the covariance sensitivities used to form the FIM.

Let
\begin{equation}
\mathbf{D}=\mathrm{diag}(0,1,\ldots,M-1).
\label{eq:D_def_AppA}
\end{equation}
For $\mathbf{S}(\theta)=\mathbf{a}(\theta)\mathbf{a}^{H}(\theta)$,
\begin{equation}
\frac{\partial\mathbf{S}(\theta)}{\partial\theta}
=
j2\pi d_0\cos\theta
\big(\mathbf{D}\mathbf{S}(\theta)-\mathbf{S}(\theta)\mathbf{D}\big).
\label{eq:dS_dtheta_AppA}
\end{equation}
Therefore, for the SC model in~\eqref{eq:Ry_SC_model},
\begin{equation}
\frac{\partial\mathbf{R}_y^{\mathrm{SC}}}{\partial\theta}
=
\frac{\partial\mathbf{S}(\theta)}{\partial\theta}
\odot
\mathbf{R}_{\mathrm{SC}}(\mathbf{c}),
\label{eq:dRy_dtheta_SC_AppA}
\end{equation}
\begin{equation}
\frac{\partial\mathbf{R}_y^{\mathrm{SC}}}{\partial c_k}
=
\mathbf{S}(\theta)\odot\mathbf{T}_k,
\qquad
k=0,\ldots,M-1.
\label{eq:dRy_dck_SC_AppA}
\end{equation}
Substituting the vectorized forms of~\eqref{eq:dRy_dtheta_SC_AppA} and~\eqref{eq:dRy_dck_SC_AppA} into~\eqref{eq:FIM_SB}, followed by the Schur complement over $\mathbf{c}$, gives~\eqref{eq:CRLB_theta_SC}.

\section{Covariance Sensitivities for Lemma~\ref{lemma2}}
\label{App.B}
For the PI model, write $k=|m-\ell|$ and define, for $k\geq 1$,
\begin{align}
d_k(\theta)&=kd_0\cos\theta,
\label{eq:dk_theta_AppB}\\
\zeta_k(\theta,a_3)&=\frac{d_k(\theta)}{a_3d_k(\theta)+1},
\label{eq:zeta_AppB}\\
\alpha_k&=e^{-\kappa\zeta_k(\theta,a_3)}.
\label{eq:alpha_k_AppB}
\end{align}
The required lag-wise derivatives are
\begin{align}
\frac{\partial\alpha_k}{\partial\kappa}
&=
-\zeta_k\alpha_k,
\label{eq:dalpha_dkappa_AppB}
\\
\frac{\partial\alpha_k}{\partial\theta}
&=
\kappa\alpha_k
\frac{kd_0\sin\theta}
{\big(a_3d_k(\theta)+1\big)^2},
\label{eq:dalpha_dtheta_AppB}
\\
\frac{\partial\alpha_k}{\partial a_3}
&=
\kappa\alpha_k
\frac{d_k^2(\theta)}
{\big(a_3d_k(\theta)+1\big)^2}.
\label{eq:dalpha_da3_AppB}
\end{align}
For $x\in\{\theta,\kappa,a_3\}$, define $\mathbf{A}_x\in\mathbb{R}^{M\times M}$ by
\begin{equation}
[\mathbf{A}_x]_{m\ell}
=
\begin{cases}
0, & m=\ell,\\[1mm]
\dfrac{\partial\alpha_{|m-\ell|}}{\partial x}, & m\neq\ell.
\end{cases}
\label{eq:Ax_def_AppB}
\end{equation}
Then
\begin{equation}
\frac{\partial\mathbf{R}_b}{\partial x}
=
2\lambda_{11}\mathbf{A}_x,
\qquad
x\in\{\theta,\kappa,a_3\}.
\label{eq:dRb_dx_AppB}
\end{equation}
Using~\eqref{eq:Ry_PI_model}, the parameter sensitivities are
\begin{align}
\frac{\partial\mathbf{R}_y^{\mathrm{PI}}}{\partial\theta}
&=
\rho\left[
\frac{\partial\mathbf{S}(\theta)}{\partial\theta}
\odot\mathbf{R}_b
+
\mathbf{S}(\theta)\odot
\frac{\partial\mathbf{R}_b}{\partial\theta}
\right],
\label{eq:dRy_dtheta_PI_AppB}
\\
\frac{\partial\mathbf{R}_y^{\mathrm{PI}}}{\partial\kappa}
&=
\rho\,\mathbf{S}(\theta)\odot
\frac{\partial\mathbf{R}_b}{\partial\kappa},
\label{eq:dRy_dkappa_PI_AppB}
\\
\frac{\partial\mathbf{R}_y^{\mathrm{PI}}}{\partial a_3}
&=
\rho\,\mathbf{S}(\theta)\odot
\frac{\partial\mathbf{R}_b}{\partial a_3},
\label{eq:dRy_da3_PI_AppB}
\\
\frac{\partial\mathbf{R}_y^{\mathrm{PI}}}{\partial\rho}
&=
\mathbf{S}(\theta)\odot\mathbf{R}_b,
\label{eq:dRy_drho_PI_AppB}
\\
\frac{\partial\mathbf{R}_y^{\mathrm{PI}}}{\partial\sigma_n^2}
&=
\mathbf{I}_M.
\label{eq:dRy_dsigman_PI_AppB}
\end{align}
Here $\partial\mathbf{S}(\theta)/\partial\theta$ is given by~\eqref{eq:dS_dtheta_AppA}. Substituting the vectorized sensitivities into~\eqref{eq:FIM_SB}, followed by the Schur complement over $[\rho,\sigma_n^2]^T$, gives~\eqref{eq:CRLB_PI_matrix}.